\theoremstyle{definition}
\newtheorem{definition}{Definition}
\theoremstyle{plain}
\newtheorem*{prob}{Problem}
\newtheorem{obs}{Observation}
\newtheorem{theorem}{Theorem}
\newtheorem{lemma}{Lemma}
\newtheorem{corollary}{Corollary}
\newcommand{\comment}[1]{}
\begin{document}

\title{DMVP: Foremost Waypoint Coverage of Time-Varying Graphs}

\author[1]{Eric Aaron\thanks{eaaron@cs.vassar.edu}}
\author[2]{Danny Krizanc\thanks{ekmeyerson@wesleyan.edu}}
\author[2]{Elliot Meyerson\thanks{dkrizanc@wesleyan.edu}}

\affil[1]{Computer Science Department, Vassar College, Poughkeepsie, NY, USA}
\affil[2]{Department of Mathematics \& Computer Science, Wesleyan University, Middletown, CT, USA}

\date{}

\maketitle

\begin{abstract}
We consider the Dynamic Map Visitation Problem (DMVP), in which a team of agents must visit a collection of critical locations as quickly as possible, in an environment that may change rapidly and unpredictably during the agents' navigation. We apply recent formulations of time-varying graphs (TVGs) to DMVP, shedding new light on the computational hierarchy $\mathcal{R} \supset \mathcal{B} \supset \mathcal{P}$ of TVG classes by analyzing them in the context of graph navigation. We provide hardness results for all three classes, and for several restricted topologies, we show  a separation between the classes by showing severe inapproximability in $\mathcal{R}$, limited approximability in $\mathcal{B}$, and tractability in $\mathcal{P}$. We also give topologies in which DMVP in $\mathcal{R}$ is fixed parameter tractable, which may serve as a first step toward fully characterizing the features that make DMVP difficult.
\end{abstract}

\section{Introduction}

In navigation-oriented application domains such as autonomous\comment{ vehicles,} mobile robots, wireless sensor networks, security, surveillance, mechanical inspection, and more, graph representations are commonly employed for formulating and analyzing the central\comment{ multi-agent} navigation or area inspection problems. Many approaches to coverage problems \cite{Choset01,Correll08,Easton05} are based on static graph representations, as are visitation problems \cite{MVP} or related combinatorial optimization problems such as the $k$-Chinese Postman Problem \cite{Ahr06,Blum94} and $k$-Traveling Repairman Problem \cite{Edmonds73,Rao07}. But static graph structures do not represent the\comment{ kinds of} dynamic environments that can occur in applications of autonomous robots or non-player characters in video games and virtual worlds. In this paper, we present the \emph{Dynamic Map Visitation Problem} (DMVP), applying recent formulations of \emph{highly dynamic graphs} (or \emph{time-varying graphs} (TVGs)) \cite{Santoro12,Kuhn11} to an essential graph navigation problem: In DMVP, a team of agents must inspect a collection of critical locations on a map (represented as a graph) as quickly as possible, but the agents' environment may change during navigation.

The application of TVG models is essential to DMVP. In applications such as planetary exploration \cite{Wagner99}, search and rescue in hazardous environments (e.g., natural disasters, areas of armed conflict), or even ad-hoc network inspection, many aspects of the structure of graph waypoints and edges governing navigation can change during agent navigation, and TVG models can capture variation in graph structure in ways that static graphs cannot. Our paper presents new results about DMVP complexity and demonstrates distinctions among classes of TVGs; details of our main results are summarized in Section~\ref{ourResults}.

When incorporating dynamics into a problem such as DMVP, there are many options for how to constrain/model the dynamics of the graph. Dynamics can be deterministic (e.g., \cite{Santoro10,Wade11,Wade13,Mans13,Xuan03}) or stochastic (e.g., \cite{Baumann11,Lotker08}). In this paper, to provide a foundation for future work, and exemplify the aspects of topologies and dynamics that make our problem easy or hard, we focus on the deterministic case. The deterministic approach is also particularly relevant for situations in which some prediction of changes is feasible. Quite a bit of this previous work has required that the graph be connected \emph{at all times} \cite{Lotker08,Wade13,Kuhn10}. Indeed, for complete map visitation to be possible, every critical location must be eventually reachable. However, in application environments such as those outlined above, at any given time the waypoint graph may be disconnected. Our model must be general enough to allow for this phenomenon. 

We adopt three classes of TVGs, each of which places constraints on edge dynamics. In $\mathcal{R}$, edges must reappear eventually; in $\mathcal{B}$, edges must appear within some time bound; in $\mathcal{P}$ edge appearances are periodic. These classes have proven to be critical to the TVG taxonomy \cite{Santoro12}. They have been studied with respect to problems such as broadcast \cite{Santoro10} and exploration \cite{Santoro13,Wade11}, with results relating to feasibility of computation and bounds on broadcast and exploration time. $\mathcal{R}$, $\mathcal{B}$, and $\mathcal{P}$ place intuitive constraints on the nature of dynamic navigation domains. Even the assumption of periodicity of edges has applications to navigation of transportation networks \cite{Santoro13,Wade11}, as well as environments periodically patrolled by other agents, who can prohibit or guarantee safe traversal of an edge.

In this paper, we shed further light on the computational hierarchy of $\mathcal{R}$, $\mathcal{B}$, and $\mathcal{P}$ \cite{Santoro10}, by analyzing them in the context of DMVP, a natural but difficult problem in global navigation. We provide hardness results for all three classes. For several restricted topologies, we demonstrate separation between the classes by showing severe inapproximability in $\mathcal{R}$, limited approximability in $\mathcal{B}$, and tractability in $\mathcal{P}$. We also give topologies in which DMVP in $\mathcal{R}$ is tractable and fixed parameter tractable, which may serve as a first step towards fully characterizing the topological features that make DMVP difficult. Because our goal in this paper is to cleanly differentiate the classes of dynamics we are exploring, rather than explore the interactions between multiple agents, our results here focus on the case of a single agent.

\subsection{Definitions and TVG Concepts}
\label{definitions}

As a foundation for our work, we adopt the definitions below from Santoro et al. \cite{Santoro12}.

\begin{definition}
A TVG (time-varying graph, dynamic graph, or dynamic network) is a five-tuple $\mathcal{G} = (V, E, \mathcal{T}, \rho, \zeta)$, where $\mathcal{T} \subseteq \mathbb{T}$ is the \emph{lifetime} of the system, \emph{presence function} $\rho(e,t) = 1 \iff$ edge $e \in E$ is available at time $t \in \mathcal{T}$, and \emph{latency function} $\zeta(e,t)$ gives the time it takes to cross $e$ if starting at time $t$. The graph $G = (V,E)$ is called the \emph{underlying graph} of $\mathcal{G}$, with $\lvert V \rvert = n$.
\end{definition}

In the most general case, $\mathbb{T}$ can be $\mathbb{R}$, and edges can be directed. However, in our work we consider the discrete case in which $\mathbb{T} = \mathbb{N}$, edges are undirected, and all edges have uniform travel cost $\zeta(e,t) = 1$ at all times. If agent $a$ is at $u$, and edge $(u,v)$ is available at time $\tau$, then $a$ can take $(u,v)$ during this time step, visiting $v$ at time $\tau+1$. As $a$ traverses $\mathcal{G}$ we say $a$ both \emph{visits} and \emph{covers} the vertices in its traversal, and we will henceforth use these terms interchangeably.  A \emph{temporal subgraph} of a TVG $\mathcal{G}$ results from restricting the lifetime $\mathcal{T}$ of $\mathcal{G}$ to some $\mathcal{T}' \subseteq \mathcal{T}$. A \emph{static snapshot} is a temporal subgraph throughout which the availability of each edge does not change, i.e., edges are static.

\begin{definition}
$\mathcal{J} = \{(e_1,t_1),...,(e_k,t_k)\}$ is a \emph{journey} $\iff \{e_1,...,e_k\}$ is a walk in $G$ (called the \emph{underlying walk} of $\mathcal{J}$), $\rho(e_i,t_i) = 1$ and $t_{i+1} \geq t_i + \zeta(e_i,t_i)$ for all $i < k$. The \emph{topological length} of $\mathcal{J}$ is $k$, the number of edges traversed. The \emph{temporal length} is the duration of the journey: $(arrival \ date) - (departure \ date)$. 
\end{definition}

Given a date $t$, a journey from $u$ to $v$ departing on or after $t$ whose arrival date $t'$ is soonest is called \emph{foremost}; whose topological length is minimal is called \emph{shortest}; and whose temporal length is minimal is called \emph{fastest}.

In \cite{Santoro12}, a hierarchy of thirteen classes of TVG's is presented. In related work on exploration \cite{Santoro13} and broadcast \cite{Santoro10}, focus is primarily on the chain $\mathcal{R} \supset \mathcal{B} \supset \mathcal{P}$ defined below. We adopt these classes into our domain, which we believe enforce natural constraints in our application environments. 

\begin{definition}{(Recurrent edges)}  $\mathcal{R}$ is the class of all TVG's $\mathcal{G}$ such that $G$ is connected, and $ \forall e \in E, \forall t \in \mathcal{T}, \exists t' > t$ s.t. $ \rho(e,t') = 1$.
\end{definition}

\begin{definition}{(Time-bounded recurrent edges)}  $\mathcal{B}$ is the class of all TVG's $\mathcal{G}$ such that $G$ is connected, and $ \forall e \in E, \forall t \in \mathcal{T}, \exists t' \in [t, t + \Delta)$ s.t. $\rho(e,t') = 1$, for some integer $\Delta$. 
\end{definition}

\begin{definition}{(Periodic edges)}  $\mathcal{P}$ is the class of all TVG's $\mathcal{G}$ such that $G$ is connected, and $ \forall e \in E, \forall t \in \mathcal{T}, \forall k \in \mathbb{N}, \rho(e,t) = \rho(e,t + kp)$ for some integer $p$. $p$ is called the \emph{period} of $\mathcal{G}$.
\end{definition}

As much as possible, we also take standard notation and terms from the graph theory literature. We rely on several underlying graph topologies. A \emph{star} is a tree in which at most one vertex has degree greater than one. The leaves of a star are called \emph{points}. A \emph{spider} is a tree in which at most one vertex has degree greater than two. In other words, a spider consists of a set of vertex-disjoint paths, called \emph{arms}, each of which has exactly one endpoint connected to the common central vertex $c$. A \emph{comb} is a max-degree 3 tree, in which there exists a simple path containing every vertex of degree 3. Such a path is called a \emph{backbone} of the comb. Paths edge-disjoint to the backbone are called \emph{arms}. A leaf distance 1 from the backbone is called a \emph{tooth}. An \emph{r-almost-tree} is a connected graph with $\lvert V \rvert + r - 1$ edges, that is, $r$ edges can be removed to produce a tree.

\begin{prob}
Given a TVG $\mathcal{G}$ and a set of starting locations $S$ for $k$ agents in $G$, the TVG foremost coverage or dynamic map visitation problem (\emph{DMVP}) is the task of finding journeys starting at time 0 for each of these $k$ agents such that every node in $V$ is in some journey, and the maximum temporal length among all $k$ journeys is minimized. The decision variant asks whether these journeys can be found such that no journey ends after time $t$.
\end{prob}

We think of the input $\mathcal{G}$ as a temporal subgraph of some TVG $\mathcal{G}_\infty$ with lifetime $\mathbb{N}$ and the same edge constraints as $\mathcal{G}$. Thus, the limited information provided in $\mathcal{G}$ is used to compute complete solutions for agents covering $\mathcal{G}_\infty$. When unspecified, assume that DMVP refers to DMVP for a single agent.

\subsection{Main Results}
\label{ourResults}

Our results are summarized in Table~\ref{resultsTable}. We show that DMVP in $\mathcal{R}$ is NP-hard to approximate within any factor, when the underlying graph $G$ is restricted to a star or tree of max degree 3. We show that in $\mathcal{B}$ this problem is NP-hard to approximate within any factor less than $\Delta$, when $G$ is restricted to a spider or tree of max degree 3. We show that in $\mathcal{P}$, DMVP is NP-complete when $p=1$, and that there is a nontrivial class of graphs for which $p=2$ is NP-hard, but $p=1$ is not.

We show that in $\mathcal{R}$, DMVP is solvable in $O(T)$ when $G$ is a path, $O(Tn)$ when $G$ is a cycle, and $O(Tn^3+n^22^n)$ for general graphs,  where $T$ is the duration of $\mathcal{G}$, as defined in Section \ref{modelAndProblem}. Furthermore, in $\mathcal{R}$, DMVP is fixed parameter tractable when $G$ is an $m$-leaf $O(1)$-almost tree, and poly-time solvable when $m = O(\lg n)$. In $\mathcal{B}$, we demonstrate a tight $\Delta$-approximation for trees, and a $2\Delta$-approximation for general graphs. We demonstrate a class of problems which are NP-hard in $\mathcal{B}$, but solvable by an online algorithm in $\mathcal{P}$. We show that DMVP in $\mathcal{P}$ is solvable in polynomial time when $G$ is a spider, for fixed $p$, and we show that when $p=2$, DMVP is solvable in linear time for general trees.

\begin{table}[h]
\caption{DMVP separations and results by TVG class and graph class\label{resultsTable}}
\begin{center}
\resizebox{12.5cm}{!} {
\begin{tabular}{| c | c | c | c |}
\hline
\multicolumn{4}{ | c | }{DMVP separations} \\
\hline
TVG class & spiders & max-degree 3 trees & general trees\\\hline
$\mathcal{R}$ & no approx. & no approx. & no approx. \\\hline
$\mathcal{B}$ & tight $\Delta$-approx. & tight $\Delta$-approx. & tight $\Delta$-approx. \\\hline
$\mathcal{P}$ & in P, for fixed $p$ & $O(n)$ exact, for $p=2$ & $O(n)$ exact, for $p=2$ \\\hline
\multicolumn{4}{ | c | }{$\exists$ graph class over which DMVP NP-hard in $\mathcal{P}$ with $p=2$, easy with $p=1$.} \\\hline
\end{tabular}}

\resizebox{12.5cm}{!} {
\begin{tabular}{| c | c | c | c | c |}
\hline
\multicolumn{5}{ |c| }{Complexity of exact algorithms in $\mathcal{R}$} \\
\hline
path & cycle & general graphs & $m$-leaf $c$-almost trees & $O(\lg n)$-leaf $c$-almost trees \\\hline
$O(T)$ & $O(Tn)$ & $O(Tn^3+n^22^n)$ & in FPT & in P \\\hline
\end{tabular}}
\end{center}
\end{table}

The remainder of this paper is organized as follows: preliminaries (\ref{modelAndProblem}), lower bounds (\ref{lowerBounds}), upper bounds (\ref{upperBounds}), open problems and discussion (\ref{openProbs}).

\section{Preliminaries}
\label{modelAndProblem}

For the minimization problem $\mbox{\emph{DMVP}}(\mathcal{G},S)$ and the corresponding decision problem $\mbox{\emph{DMVP}}(\mathcal{G},S,t)$, input is viewed as a sequence of graphs $G_i$ each represented as an adjacency matrix, with an associated integer duration $t_i$, i.e. $\mathcal{G} = (G_1,t_1),(G_2,t_2),$ $...,(G_m,t_m)$, where $G_1$ appears initially at time zero. Let $T = \sum_{i=1}^{m} t_i$. Note that since each $t_i$ can be encoded in $O(\lg t_i)$ space, it is possible for $T$ to be exponential in the size of $\mathcal{G}$. The following observation is required to show that the number of time steps of $\mathcal{G}$ that need to be considered for DMVP is in fact polynomial in the size of $\mathcal{G}$.

\begin{obs}
\label{2n}
When computing DMVP over $\mathcal{G}$, it is not necessary to consider each static temporal subgraph $(G_i,t_i)$ for more than $2n-3$ time steps. 
\end{obs}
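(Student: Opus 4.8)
The plan is to fix an arbitrary optimal single-agent journey and show that within each static snapshot $(G_i,t_i)$ at most $2n-3$ of its time steps ever carry covering or repositioning information, while every remaining step is spent waiting in place. Since $G_i$ does not change during the snapshot, such waiting enables nothing new and need never be branched on: it suffices to expand each snapshot for $2n-3$ steps and then append a single ``wait until the snapshot ends'' transition. This bounds the number of relevant time steps by $(2n-3)m$, which is polynomial in the input even though $T=\sum_i t_i$ may be exponential.

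The combinatorial core is a fact about static graphs that I would isolate first: in a connected graph on $c$ vertices, a walk that starts at an arbitrary vertex $u$ and visits every vertex, but may end anywhere, has length at most $2c-3$ for $c\ge 2$. I would prove this by fixing any spanning tree and running a depth-first traversal from $u$; the closed DFS tour traverses each of the $c-1$ tree edges exactly twice, for total length $2(c-1)$, and truncating the tour at the last vertex saves its return path, whose length equals that vertex's depth and is at least $1$. Hence the truncated walk has length at most $2(c-1)-1=2c-3$. The bound is tight on a star centred at $u$ with $c-1$ points, where every point forces an outward traversal and every point but the last forces a return, giving exactly $2c-3$.

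Returning to snapshot $(G_i,t_i)$, suppose the agent enters at $u$ inside a connected component $C$ of $G_i$ with $|C|=c\le n$. If the snapshot is disconnected, so $c\le n-1$, then even covering all of $C$ and afterwards positioning at any desired exit vertex $w\in C$ costs at most $2(c-1)\le 2n-4<2n-3$, so the whole productive portion fits within $2n-3$ steps. If the snapshot is connected, so $c=n$, I would invoke an exchange argument: if the agent covers even one new vertex here, it may as well cover all remaining vertices now, since they all lie in this single reachable component, and by the lemma it can do so (ending anywhere) in a walk of length at most $2n-3$, completing DMVP no later than any strategy that defers part of this coverage to a subsequent snapshot. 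Otherwise the agent covers nothing new and merely repositions from $u$ to some $w$, costing at most the diameter $n-1\le 2n-3$.

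The step I expect to be the main obstacle is precisely the connected case: a naive walk that covers only part of the component and then returns to its entry vertex could cost as much as $2n-2$, so the whole argument hinges on the exchange step that lets me assume full coverage with an \emph{end-anywhere} walk whenever any new vertex is touched, thereby escaping the final return traversal. I would also take care to verify that collapsing the post-covering wait into one transition preserves both feasibility and the reported temporal length, since each duration $t_i$ still contributes its full value to the makespan even though only $2n-3$ of that snapshot's steps are ever branched on; the saving is in the size of the search, not in the time arithmetic.
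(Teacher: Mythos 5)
Your proposal is correct and takes essentially the same route as the paper's proof: both rest on spanning-tree traversal bounds (a closed tour of length $2(c-1)$, truncated to $2c-3$ by ending at the last new vertex, or ending at a chosen exit vertex $w$ by traversing $w$'s branch last), split into the case where coverage can be completed inside the snapshot versus the case where the agent is confined to a component of at most $n-1$ vertices. The paper is merely terser --- it asserts the traversal bounds rather than proving your DFS lemma, and splits cases by completability rather than by connectivity of the snapshot --- but the substance is identical.
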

\begin{proof}
Suppose $G_i$ is the available static subgraph of $\mathcal{G}$ from times $\tau$ to $\tau + t_i$, and $t_i > 2n-3$. Suppose agent $a$ starts at vertex $u$ at time $\tau$. There are two cases:

Case 1:  If $a$ can complete its coverage of $\mathcal{G}$ by only traversing in $G_i$, then in the worst case $a$ can execute any complete spanning tree traversal of $G_i$, which takes no more than $2n-3$ steps. In this case, it does not matter at which vertex $a$ ends up, because the task has been completed.

Case 2: If there is a vertex $v$ such that $a$ has not covered $v$ by time $\tau$, and $u$ and $v$ are in different connected components in $G_i$, then $a$ cannot complete coverage of $\mathcal{G}$ when $G_i$ is the available static subgraph. In this case it may matter which vertex $a$ ends up at,  depending on which future edges will be available. The size of the connected component of $u$ in $G_i$ is at most $n-1$, so a spanning tree traversal of this component ending up back at $u$ takes no more than $2n-4$ steps. If $a$ would rather end up at a different vertex $w \neq u$, it simply traverses $w$'s branch of the spanning tree last, and returns up only to $w$, in fewer than $2n-4$ steps. 
\end{proof}

By Observation \ref{2n}, for any $t_i > 2n-3$, when computing DMVP, all time steps after the first $2n-3$ can be ignored (skipped). DMVP over $\mathcal{G}$ can be computed by computing DMVP over $\mathcal{G}' = (G_1,\min(t_1,2n-3)),...,(G_m,\min(t_m,2n-3))$, and adding back the cumulative time skipped before completion. $\mathcal{G}'$ can clearly be derived from $\mathcal{G}$ in $O(m)$ time. The total duration of $\mathcal{G}'$ is $T' = \sum_{i=1}^{m} \min(t_i,2n-3) < 2nm - 3m$, which is polynomial in $\lvert \mathcal{G} \rvert$. Let $\epsilon(\tau)$ be the time skipped through time $\tau \leq T'$. $\epsilon(\tau)$ can be simply calculated for all $\tau \leq T'$ in $O(T')$ time. A similar $O(T')$ preprocessing step can be run to associate each  time $\tau \in T'$ with the corresponding available static graph $G_i$, enabling $O(1)$ edge presence lookups $\rho(e,\tau)$.

Since all of the algorithms we present run in $\Omega(T')$ time, we can run these preprocessing steps for every instance of DMVP and not affect the asymptotic running time. Therefore, for the sake of simplicity, for the rest of our results we assume that this preprocessing has taken place, i.e., we think of $\mathcal{G}$ as $\mathcal{G'}$ and $T$ as $T'$, thereby avoiding the exponential nature of $T$. Note also that for the case of $\mathcal{P}$, the constraint of periodicity implies that it is only necessary to look at $p$ consecutive time steps of the input.

\section{Lower Bounds}
\label{lowerBounds}

As motivation for many of the results in this paper, it is important to note that MVP for a single agent is solvable in linear time on trees \cite{MVP}. To characterize the difficulty of DMVP in $\mathcal{R}$, we first show inapproximability over stars. A similar theorem was independently discovered in \cite{Michail14}.

\begin{theorem}
\label{npostar}
DMVP for a single agent in $\mathcal{R}$ is NP-hard to approximate within any factor, even when the underlying graph is a star.
\end{theorem}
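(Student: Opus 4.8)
The plan is to exhibit a gap-producing reduction whose two outcomes are separated by an \emph{unbounded} ratio, which is exactly what ``inapproximable within any factor'' demands. The mechanism that produces this unbounded gap is specific to $\mathcal{R}$: recurrence guarantees that every edge eventually returns but places \emph{no} bound on when. So if an agent fails to cover some point $p_i$ within the observed finite window $\mathcal{G}$, then, because $\mathcal{G}$ is only a prefix of an arbitrary $\mathcal{R}$-extension $\mathcal{G}_\infty$, the edge $(c,p_i)$ may not reappear until arbitrarily far in the future, forcing the temporal length of any completing journey to be unbounded. It therefore suffices to reduce an NP-complete problem to the decision question ``can a single agent starting at the center cover every point of this flickering star within its window?'': a YES instance will admit a covering journey of temporal length at most $T$, while a NO instance will have optimum $+\infty$. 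Any $\alpha$-approximation, for any finite $\alpha$, would have to return a finite-cost journey precisely on the YES instances, and hence would decide the NP-complete problem.

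For the core reduction I would start from 3-SAT on a formula $\phi$ with variables $x_1,\dots,x_n$ and clauses $C_1,\dots,C_m$, and build a star with center $c$ and start vertex $S=\{c\}$ whose points are organized into literal points $\{p_{x_i},p_{\bar x_i}\}$ and auxiliary clause points. The formula is encoded entirely in the presence function $\rho$: the lifetime $[0,T]$ is partitioned into consecutive blocks, and within each block only a carefully chosen subset of edges is available, so that the only way the agent can fit all required excursions (each round trip from $c$ to a point and back consumes two available time steps) into the window is to commit, in an early ``assignment'' block, to visiting exactly one of $p_{x_i},p_{\bar x_i}$ per variable, and then to find, in each later ``clause'' block, a still-unvisited literal point consistent with those commitments. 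The intended invariant is that a schedule covering all points within $[0,T]$ exists \emph{iff} the chosen literals form a satisfying assignment of $\phi$.

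I would then verify both directions of the gap. Given a satisfying assignment, the induced visiting order packs into the blocks with no overflow, yielding a journey of temporal length at most $T$; conversely, any in-window covering journey induces consistent per-variable choices (forced by the block capacities) that must satisfy every clause (forced by the clause-block availability), so $\phi$ is satisfiable. Finally I would complete the construction to a legal member of $\mathcal{R}$ by letting every edge reappear after time $T$, so that connectivity of $G$ and recurrence both hold, while noting that these post-window reappearances can be pushed arbitrarily late, which is precisely what makes the NO-instance optimum unbounded. By Observation~\ref{2n} the relevant window is polynomial, so the whole construction has size polynomial in $|\phi|$.

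The main obstacle is the middle step. A star is geometrically uniform, so every excursion has the same cost and the naive ``which point in which slot'' question degenerates to bipartite matching, which is tractable; the hardness must therefore be injected entirely through the timing of $\rho$, coupling the agent's choices across blocks tightly enough to simulate clause satisfaction while still leaving satisfiable instances exactly enough slack to finish inside the window. Designing flicker gadgets that simultaneously enforce ``one literal per variable'' and ``one satisfied literal per clause''---without admitting a spurious covering order, and without overflowing the window on satisfiable instances---is where the real work lies; everything else is bookkeeping around the $\mathcal{R}$-recurrence that converts in-window infeasibility into unbounded cost.
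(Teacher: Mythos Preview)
Your high-level framework is exactly right and matches the paper: the inapproximability comes entirely from the fact that $\mathcal{R}$ lets post-window edges reappear arbitrarily late, so any reduction that makes in-window coverage equivalent to a YES instance of an NP-complete problem immediately yields an unbounded optimum on NO instances. You also correctly diagnose the central obstacle on a star---every excursion costs exactly two steps, so naive slot-filling is bipartite matching, and the hardness must be injected by coupling choices across time.

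Where you and the paper diverge is the source problem and, more importantly, the coupling device. The paper reduces from \emph{Set Cover}, not 3-SAT. The star has one point $v_j$ per universe element, one point $p_i$ per set $s_i$, and a single distinguished check point $p_0$. For each $i$ in turn the schedule offers a one-step $pass(i)$ (only $(c,p_i)$ live), then a $take(i)$ phase of length $2|s_i|$ (only edges to the elements of $s_i$ live), then another one-step $pass(i)$; after all $n$ rounds comes a two-step $check$ in which \emph{only} $(c,p_0)$ is live, and finally a $finish$ phase of length $2k-1$ in which all set-points $p_i$ are live. The agent must choose, for each $i$, either to cover elements during $take(i)$ (forfeiting $p_i$ for now) or to dart to $p_i$ during the first $pass(i)$ and sit out $take(i)$. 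The check point $p_0$ is precisely the device that breaks the matching degeneracy you worried about: it is reachable at exactly one moment, so the agent cannot be stranded at any $p_i$ when $check$ begins, which makes ``take'' and ``pass'' genuinely mutually exclusive. Since $finish$ has room for only $k$ set-points, at most $k$ sets can have been taken, and those must cover the universe.

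Your 3-SAT sketch, by contrast, has a structural gap beyond being incomplete. Every point of the star must eventually be covered, so both $p_{x_i}$ and $p_{\bar x_i}$ get visited regardless; ``committing to one'' in an assignment block only shifts \emph{when} the other is visited. Your clause mechanism then looks for a ``still-unvisited literal point consistent with those commitments,'' but an unvisited literal is exactly one you set \emph{false}, not true, and in any case a literal shared by several clauses can be visited only once. Fixing this on a star would require per-clause copies of literal points plus a separate consistency gadget---machinery you have not supplied and that Set Cover sidesteps entirely, since sets and elements are already the right granularity and a single check point does all the synchronization.
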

\begin{proof}
We reduce from the set cover problem (SCP). Given a universe $\mathcal{U} = \{1,2,...,m\}$, a family $\mathcal{S}$ of subsets $s_1,s_2,...,s_n$ of $\mathcal{U}$, and an integer $k$, it is NP-complete to decide whether $\mathcal{S}$ contains a cover of $\mathcal{U}$ of size $k$ or less \cite{Karp}. Given an instance of SCP, construct a star $G = (V,E)$ with central vertex $c$; points $v_1,...,v_m$, corresponding to elements in $\mathcal{U}$; $p_1,...,p_n$, corresponding to sets in $\mathcal{S}$; and a single check point $p_0$. We use the following static subgraphs to construct a TVG $\mathcal{G}$. For all $s_i$ in $\mathcal{S}$, let $pass(i) = (V, \{c,p_i\})$, and let $take(i) = (V, E_i)$, where $E_i = \{(c,v_j) : j \in s_i \}$. Let $check = (V,\{c,p_0\})$. Let $finish = (V, F)$, where $F = \{(c,p_i) : 1 \leq i \leq n\}$. Consider the TVG $\mathcal{G}  = (pass(1),1)$,$(take(1),t_1)$,$(pass(1),1)$,...,$(pass(n),1)$,$(take(n),t_n)$,$(pass(n),1)$,$(check,2)$,$(finish,2k-1)$, where $t_i = 2\lvert s_i \rvert, \forall i \in \{1,...,n\}$. The total duration of $\mathcal{G}$ is $D = 2n + 2\sum_{i=1}^n \lvert s_i \rvert + 2k + 1 < 2n + 2mn + 2k + 1$. 
	
Consider the problem of deciding if DMVP over $\mathcal{G}$ with a single agent $a$ starting at $c$ has a solution of length no more than $D$. This problem is in NP, since given a journey over $\mathcal{G}$, we can easily check that it hits every vertex, and that all of its edges are available at the correct times. Such a solution can be no longer than $D$, since $D$ is the total duration of $\mathcal{G}$.

Suppose $\mathcal{S}$ contains a cover $\mathcal{C}$ of $\mathcal{U}$ of size $k$ or less. Then for all $s_i \in \mathcal{C}$, $a$ $takes$ $s_i$, that is, $a$ waits at $c$ during both instances of $pass(i)$, and visits all $v_j \in s_i$ and returns to $c$ during $take(i)$, which is possible since the duration of $take(i)$ is $2\lvert s_i \rvert$. Since $\mathcal{C}$ is a cover of $\mathcal{U}$, $a$ visits all $v_i$. For all $s_i \notin \mathcal{C}$, $a$ $passes$ $s_i$, that is, $a$ moves from $c$ to $p_i$ during the first $pass(i)$, waits at $p_i$ during $take(i)$, and returns to $c$ during the second $pass(i)$. During $check$, $a$ moves from $c$ to $p_0$, then back to $c$. At this point, since $\lvert \mathcal{C} \rvert \leq k$, $a$ has passed at least $n-k$ $s_i$'s. So, there are no more than $k$ $p_i$'s left unvisited. $a$ visits these during $finish$, thus completing visitation of all vertices of $G$ in no more than $D$ steps (e.g., Figure~\ref{RstarExample}).

Suppose there exists a solution to this instance of DMVP of length no more than $D$. Prior to $finish$, $a$ must have visited at least $n-k$ $p_i$'s, since $finish$ only lasts for $2k-1$ steps. So $a$ must have passed at least $n-k$ $s_i$'s. Taking and passing for a single $s_i$ are mutually exclusive, because if $a$ moves to $p_i$ during the first $pass(i)$, $a$ must wait during $take(i)$, and if $a$ both takes $s_i$ and moves to $p_i$ during the second $pass(i)$, $a$ will be trapped at $p_i$ until $finish$, and will never be able to reach $p_0$, which must be visited during $check$, the only input time at which $p_0$ is available. Thus, $a$ could have taken no more than $k$ $s_i$'s. During these $k$ or fewer takes, $a$ must have covered all $v_1,...,v_m$. So, the union of these $k$ or fewer $take(i)$'s contains all edges $(c,v_j)$, which implies that the corresponding $s_i$'s form a cover of $\mathcal{U}$ or size $k$ or less. Hence, the decision problem is NP-complete.  

Consider the minimization version of the problem with the same setup. Since it is NP-hard to decide if there is a solution of length $D$ or less, it is NP-hard to find such a solution. But after $D$ steps, $a$ may have to wait an arbitrarily long time for the next edge is a feasible solution to appear, so any feasible solution that takes longer than $D$ steps can be arbitrarily long. Therefore, the problem cannot be approximated within any factor.
 \end{proof}
 
 \begin{figure}
\includegraphics[scale=1,trim=4cm 4cm 4cm 4cm]{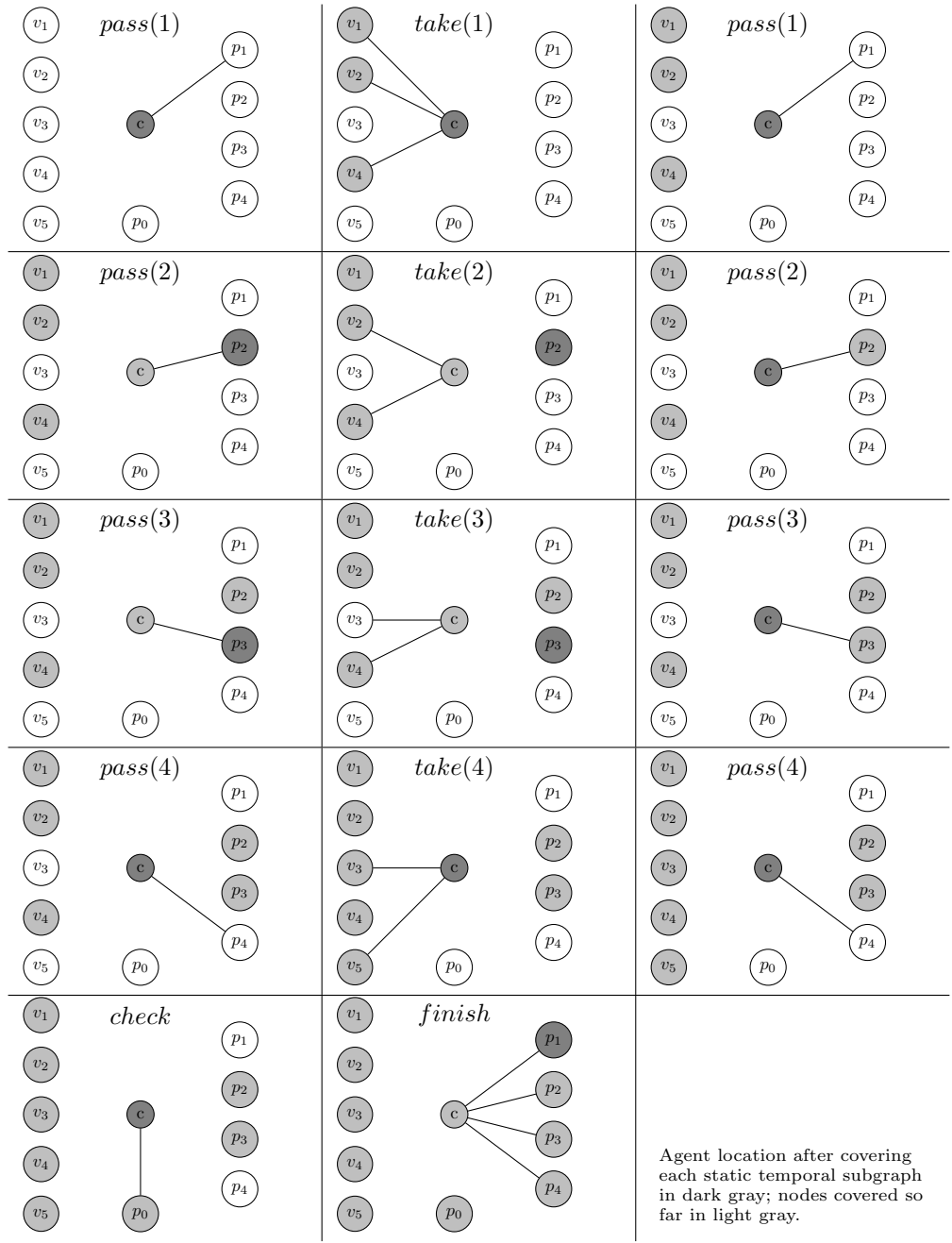}
\caption{Thm.\ref{npostar} static snapshots for $U = \{1,2,3,4,5\}$, $S =  \{\{1,2,4\},\{2,4\},\{3,4\},\{3,5\}\}$, $k = 2$.\label{RstarExample}}
\end{figure}

This inapproximability also holds over the restriction of underlying graphs to trees of max-degree 3, in particular, combs.

\begin{theorem}
\label{npocomb}
DMVP for a single agent in $\mathcal{R}$ is NP-hard to approximate within any factor, even when the underlying graph is a comb.
\end{theorem}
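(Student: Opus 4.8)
The plan is to reuse the Set Cover reduction of Theorem~\ref{npostar} almost verbatim, changing only how the ``hub'' behaviour of the star is realized. The single high-degree center $c$ is the one feature incompatible with a comb, so I would \emph{unfold} it: replace $c$ by a backbone path $b_1 - b_2 - \cdots - b_N$ and hang each former star point off its own backbone vertex as a tooth, giving element teeth $v_1,\dots,v_m$, set-indicator teeth $p_1,\dots,p_n$, and the check tooth $p_0$, with $N = m+n+1$. Each backbone vertex then has degree at most $3$ (two backbone neighbours and one tooth), the backbone contains every degree-$3$ vertex, and the teeth are the distance-$1$ leaves, so $G$ is a comb of polynomial size.

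The TVG $\mathcal{G}$ keeps the same phase structure as in Theorem~\ref{npostar} --- a $pass/take$ pair for each set, then $check$, then $finish$ --- but with the backbone edges made available throughout and the window durations inflated to pay for backbone traversal. In $take(i)$ I make available exactly the element teeth $\{(b_j,v_j): j\in s_i\}$; in $pass(i)$ only the tooth edge of $p_i$; in $check$ only the tooth edge of $p_0$; and in $finish$ all the $p_i$ tooth edges. To keep the budgets tight I would group the teeth along the backbone so that the sweep needed to collect the relevant teeth in each window has a predictable length, and set each window's duration to be precisely the round-trip sweep length plus the $2\lvert s_i\rvert$ (resp.\ the $p_0$ or leftover-$p_i$) collection cost, exactly mirroring the $2\lvert s_i\rvert$, $2$, and $2k-1$ budgets of the star reduction. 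The overall bound $D$ is then the sum of these inflated durations.

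With this in place I would re-establish the two directions of the equivalence as in Theorem~\ref{npostar}. A cover $\mathcal{C}$ of size $\le k$ yields a length-$\le D$ journey: the agent sweeps the element teeth during $take(i)$ for $s_i\in\mathcal{C}$, dips into $p_i$ during its pass windows for $s_i\notin\mathcal{C}$, collects $p_0$ during $check$, and mops up the $\le k$ remaining set-indicator teeth during $finish$. Conversely, a length-$\le D$ journey must have visited at least $n-k$ of the $p_i$ teeth before $finish$ (its duration only suffices to gather $k$ of them), hence passed $\ge n-k$ sets and taken $\le k$; since the $\le k$ taken windows must together cover all element teeth (a tooth $v_j$ is reachable only during a $take(i)$ with $j\in s_i$), the taken sets form a cover of size $\le k$.

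The crux --- and the step I expect to be most delicate --- is re-proving that taking and passing a given set remain mutually exclusive once movement costs backbone traversal. In the star this was immediate because each pass window was a single step; here I must choose the tooth ordering and window lengths so that (i) reaching and returning from $p_i$ consumes the entire $take(i)$ budget, and (ii) any attempt to both take $s_i$ and dip into $p_i$ strands the agent on a tooth whose edge then vanishes, causing it to miss the unique $check$ window at which $p_0$ is available. Verifying these timing constraints simultaneously for all sets is the real work. Once the decision problem is shown NP-complete over a comb, inapproximability within any factor follows exactly as in Theorem~\ref{npostar}: membership in $\mathcal{R}$ lets the adversary delay every edge arbitrarily after time $D$, so any solution exceeding $D$ can be made arbitrarily long, and an approximation of any factor would decide the NP-complete problem.
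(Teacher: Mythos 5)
Your high-level plan is the paper's own: unfold the star's hub into a backbone, keep the $pass$/$take$/$check$/$finish$ phase structure of Theorem~\ref{npostar}, and rerun the set-cover argument (and your final lift from NP-completeness to inapproximability is the same and is fine). However, the two places where you defer the work are exactly where the construction must change, and the specific choices you describe would break the reduction. First, the mutual-exclusion (``trap'') mechanism. You propose paying for backbone traversal by inflating the pass windows. But once a pass window is long enough for backbone travel \emph{and} has the tooth edge of $p_i$ live, the agent can dip into $p_i$ and come back out within that single window, and then still collect the elements of $s_i$ during $take(i)$; taking and passing are no longer exclusive, so the agent can cover all $v_j$ and all $p_i$ whether or not a size-$k$ cover exists, and every instance becomes a yes-instance. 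The paper avoids this by never making pass windows pay for travel: repositioning is confined to dedicated $(back, m+n)$ phases in which only backbone edges are present, while each $pass(i)$ lasts exactly one step. Entering $p_i$ during the first $pass(i)$ then traps the agent for all of $take(i)$, and entering it during the second $pass(i)$ traps it through $check$, so it misses $p_0$; the unit-length window is precisely what lets the star's trap argument survive on the comb.

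Second, your finish-phase counting fails with a single check tooth. In your layout the leftover $p_i$ teeth sit at different backbone positions, so their collection cost is not uniform: any finish budget large enough for the yes-case, where the $\le k$ leftover teeth may be the farthest ones (cost roughly $m+n+2k$), also lets a cheating agent collect strictly more than $k$ leftover teeth when they cluster at the near end of the backbone (cost roughly $m+3k$, which is smaller whenever $k$ is small relative to $n$). So you cannot conclude that at least $n-k$ sets were passed, and no cover of size $\le k$ can be extracted. The paper's fix is a second check tooth $p_{n+1}$ at the far end $b_{m+n+1}$, available only during $finish$, whose duration is $m+n+2+2k$: the agent is forced to walk the entire backbone in any case, so each leftover tooth costs exactly $2$ regardless of position, and the budget caps the leftovers at $k$. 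Without these two devices --- dedicated back phases with unit pass windows, and the far-end anchor tooth --- the reduction as you describe it does not decide set cover.
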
 
\begin{proof}
Analogous to Theorem \ref{npostar}, we reduce from the set cover problem (SCP) \cite{Karp}. Given an instance of SCP, construct a comb $G = (V,E)$ with backbone $b_0b_1b_{m+n+1}$; teeth $v_1,...,v_m$, corresponding to elements in $\mathcal{U}$, with $(b_i,v_i) \in E \ \forall i = 1,...,m$; teeth $p_1,...,p_n$, corresponding to sets in $\mathcal{S}$, with $(b_{m+i},p_i) \in E \ \forall i = 1,...,n$; and two check teeth $p_0$ and $p_{n+1}$ with $(b_0,p_0),(b_{m+n+1},p_{n+1}) \in E$. Let $B = \{(b_0,b_1),...,(b_{m+n-1},b_{m+n})\}$ be the set of all edges in the backbone of $G$. We use the following static subgraphs of $G$ to construct a TVG $\mathcal{G}$. For all $s_i$ in $\mathcal{S}$, let $pass(i) = (V, B \cup \{(b_{m+i},p_i)\})$, and let $take(i) = (V, E_i)$, where $E_i = B \cup \{(b_j,v_j) : j \in s_i \}$. Let $check = (V, \{(b_0,p_0)\})$. Let $finish = (V, F)$, where $F = B \cup \{(b_{m+i},p_i) : 1 \leq i \leq n+1\} \cup \{(b_{m+n+1}, p_{n+1})$. Let $back = (V, B)$.

Define the TVG $\mathcal{G}  = (back, m+n),(pass(1),1),(take(1),3m),(pass(1),1),...$,$(back, m+n)$,\\$(pass(n),1)$,$(take(n),3m)$,$(pass(n),1),(back, m+n),(check,2),$$(finish,m+n+2+2k)$. The total duration of $\mathcal{G}$ is $D = n(4m+n+2)+(m+n)+2+(m+n+1+2k) = n^2 + 4mn + 4n + 2m + 2k + 3$.
	
Consider the problem of deciding if DMVP over $\mathcal{G}$ with a single agent $a$ starting at $b_0$ has a solution of length no more than $D$. This problem is in NP, since given a journey over $\mathcal{G}$, we can easily check that it hits every vertex, and that all of its edges are available at the correct times. Such a solution can be no longer than $D$, since $D$ is the total duration of $\mathcal{G}$.

Suppose $\mathcal{S}$ contains a cover $\mathcal{C}$ of $\mathcal{U}$ of size $k$ or less. Then for all $s_i \in \mathcal{C}$, $a$ $takes$ $s_i$, that is, $a$ travels to $b_0$ during \emph{back}, and visits all $v_j \in s_i$, and returns to the backbone during $take(i)$, which is possible since the duration of $take(i)$ is $3m$, which allows $a$ to take all of the at most $m$ available elements while traveling up the backbone. Since $\mathcal{C}$ is a cover of $\mathcal{U}$, $a$ visits all $v_i$. For all $s_i \notin \mathcal{C}$, $a$ $passes$ $s_i$, that is, $a$ moves to $b_{i+m}$ during $back$, and to $p_i$ during the first $pass(i)$, waits at $p_i$ during $take(i)$, and returns to $b_{i+m}$ during the second $pass(i)$. During the final $back$, $a$ moves to $b_0$, and during $check$, $a$ moves from $b_0$ to $p_0$, then back to $b_0$. At this point, since $\lvert \mathcal{C} \rvert \leq k$, $a$ has passed at least $n-k$ $s_i$'s. So, there are no more than $k$ $p_i$'s left unvisited. $a$ visits these during $finish$, each at cost 2 off the path length $m+n+2$ path up the backbone to $p_{n+1}$, thus completing visitation of all vertices of $G$ in no more than $D$ steps (e.g., Table 1).

Suppose there exists a solution to this instance of DMVP of length no more than $D$. Prior to $finish$, $a$ must have visited at least $n-k$ $p_i$'s, since $finish$ only lasts for $2k-1$. So $a$ must have passed at least $n-k$ $s_i$'s. Taking and passing for a single $s_i$ are mutually exclusive, because if $a$ moves to $p_i$ during the first $pass(i)$, $a$ must wait during $take(i)$, and if $a$ both takes $s_i$ and moves to $p_i$ during the second $pass(i)$, $a$ will be trapped at $p_i$ until $finish$, and will never be able to reach $p_0$, which must be visited during $check$, the only input time at which $p_0$ is available. Thus, $a$ could have taken no more than $k$ $s_i$'s. During these $k$ or fewer takes, $a$ must have covered all $v_1,...,v_m$. So, the union of these $k$ or fewer $take(i)$'s contains all edges $(c,v_j)$, which implies that the corresponding $s_i$'s form a cover of $\mathcal{U}$ or size $k$ or less. Hence, the decision problem is NP-complete.  

Consider the minimization version of the problem with the same setup. Since it is NP-hard to decide if there is a solution of length $D$ or less, it is NP-hard to find such a solution. But after $D$ steps, $a$ may have to wait an arbitrarily long time for the next edge is a feasible solution to appear, so any feasible solution that takes longer than $D$ steps can be arbitrarily long. Therefore, the problem cannot be approximated within any factor.
\end{proof}

\begin{figure}
\includegraphics[scale=1,trim=4cm 4cm 4cm 4cm]{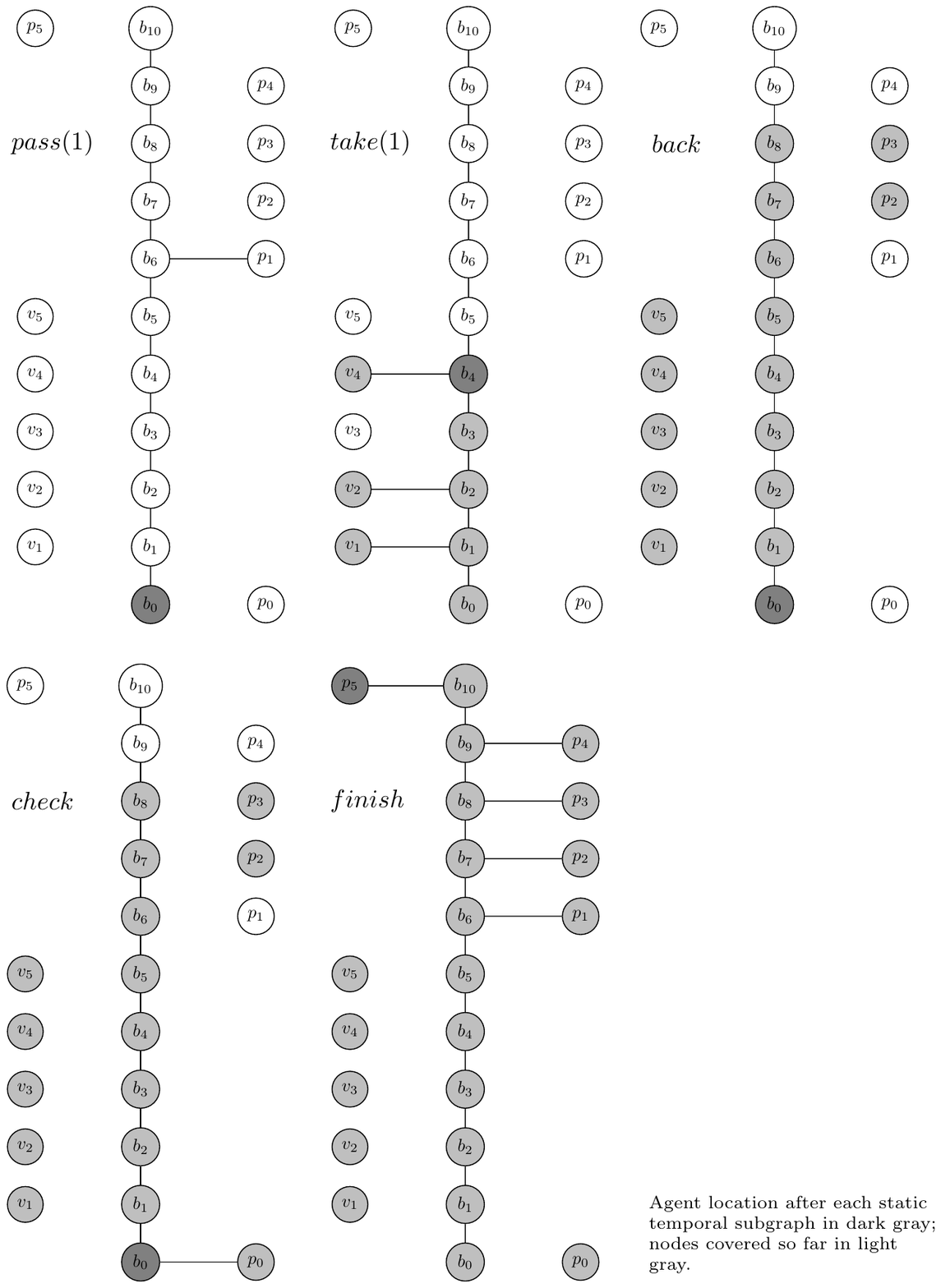}
\caption{Thm. 2 sample static snapshots for $U = \{1,2,3,4,5\}, s_1 = \{1,2,4\} \in S$, with $\lvert S \rvert = 4$, and $k = 2$.}
\end{figure}

We have a similar set of lower bounds for the case of $\mathcal{B}$, but with \emph{some} ability to approximate. We later show (Theorem \ref{btree}) that these approximation bounds are indeed tight for all trees. 

\begin{theorem}
\label{bapx}
DMVP for a single agent in $\mathcal{B}$ is NP-hard to approximate within any factor less than $\Delta$, even when the underlying graph is a spider, $\forall \Delta > 1$.
\end{theorem}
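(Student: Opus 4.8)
The plan is to adapt the set-cover reduction of Theorem~\ref{npostar} to the class $\mathcal{B}$, replacing the star by a spider and, crucially, replacing the mechanism that produced \emph{unbounded} waiting in $\mathcal{R}$ by one that produces a bounded but \emph{multiplicatively amplified} waiting cost in $\mathcal{B}$. As before, I would start from an SCP instance $(\mathcal{U},\mathcal{S},k)$ and build a spider with central vertex $c$, a point or short arm $p_i$ for each set $s_i$, element vertices $v_j$ placed on arms so that a \emph{take} gadget for $s_i$ covers exactly the $v_j$ with $j \in s_i$, and a check/finish structure analogous to the earlier proofs. I would keep the \emph{pass}/\emph{take} dichotomy so that a cover $\mathcal{C}$ of size at most $k$ corresponds precisely to a journey that, during a synchronized window of total duration $D$, covers every element (by taking the sets of $\mathcal{C}$) and every set-point (by passing the rest and then finishing), completing in $L = \Theta(D)$ steps.

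The genuinely new ingredient is the edge schedule. To place $\mathcal{G}$ in $\mathcal{B}$ with parameter $\Delta$, I would give every edge a period-$\Delta$ presence function (which trivially satisfies the $\Delta$-bound of $\mathcal{B}$, since each edge then reappears once per window of length $\Delta$), and I would phase the windows so that the synchronized YES journey meets each edge it needs exactly when that edge is present and therefore never stalls. The reason a spider and a $\Delta$-bounded schedule are needed, rather than the star-with-unbounded-gaps of Theorem~\ref{npostar}, is that an agent that has \emph{not} made a correct cover decision is thrown out of phase: to cover the vertices it still owes it must now traverse edges that are available only once every $\Delta$ steps, paying essentially $\Delta$ steps per remaining traversal instead of one. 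By arranging for the forced out-of-phase traversals in the NO case to constitute a constant fraction of the whole journey (e.g.\ the entire element-covering portion), the NO-instance optimum is driven up to at least $(\Delta - o(1))\,L$, which is exactly the additive-to-multiplicative conversion the theorem demands.

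With $\mathrm{OPT} = L$ on YES instances and $\mathrm{OPT} \ge (\Delta - o(1))L$ on NO instances, the conclusion mirrors the end of Theorem~\ref{npostar}: any polynomial-time $\alpha$-approximation with $\alpha < \Delta$ would, on sufficiently large instances, return a value at most $\alpha L < (\Delta - o(1))L$ on YES instances but at least $(\Delta - o(1))L$ on NO instances, and hence would decide SCP in polynomial time. Since the construction is parameterized by $\Delta$ and works for every fixed integer $\Delta > 1$, the factor-$\Delta$ barrier follows; the lower-order terms wash out asymptotically, which is precisely why the statement forbids any factor strictly less than $\Delta$, and Theorem~\ref{btree} will later show this threshold is tight.

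The step I expect to be the main obstacle is the second one: engineering the schedule so that the YES/NO gap is truly multiplicative in $\Delta$ rather than merely additive, while simultaneously (i) keeping the instance in $\mathcal{B}$ --- every \emph{idle} edge, not only the ones in active use, must keep reappearing within each $\Delta$-window --- and (ii) proving that an adversarial agent cannot cheaply re-synchronize and thereby dodge the $\Delta$-factor penalty, whether by reordering which arms it visits, by absorbing an early one-step delay that then propagates harmlessly, or by interleaving coverage with waiting. Carefully characterizing exactly which vertices remain uncovered when no small cover exists, and showing that each such vertex genuinely costs $\Theta(\Delta)$ to reach no matter how the agent behaves, is where the delicate bookkeeping will lie.
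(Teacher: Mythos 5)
There is a genuine gap, and it sits exactly where you predicted it would: the schedule engineering. Your proposal keeps set cover as the base problem and gives every edge its own period-$\Delta$ window, phased along the intended YES journey (a staggered ``conveyor belt''). This fails on both counts. First, re-synchronization: if edge $i$ of the intended route opens at times $\equiv \tau_i \pmod{\Delta}$, where $\tau_i$ is the time the YES journey crosses it, then an agent that falls behind waits at most $\Delta-1$ steps at \emph{one} edge and is thereafter back on the wave, since every subsequent edge opens exactly when it arrives. The penalty is additive ($\le \Delta-1$ in total), not $\Delta$ per edge, so the multiplicative gap collapses. Second, even if you fix the schedule (the paper's fix: all off-wave edges ``flash'' \emph{synchronously} once per $\Delta$, so an off-wave agent moves one edge per $\Delta$ steps and can never catch a wave moving one edge per step), set cover gives you no amplification mechanism: in a NO instance the agent can still take $k$ sets in phase, so what remains uncovered at the end of the script is a handful of vertices, each recoverable in $O(\Delta)$ extra steps because every edge in $\mathcal{B}$ recurs within $\Delta$. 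The NO optimum is then at most $L + O(\Delta n)$, and since $L$ grows with the instance the ratio tends to $1$, not $\Delta$. Note also that your ``constant fraction'' accounting is internally inconsistent: penalizing an $\alpha$-fraction of the journey yields inapproximability factor $1+\alpha(\Delta-1)$, so you need the penalized portion to be a $1-o(1)$ fraction of the whole journey, not merely a constant fraction.

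The paper's proof supplies precisely the missing gadget. It reduces from 3-partition (not set cover) on a spider whose arms encode the integers $s_i$, plus $m$ unit checkpoint arms, plus one long arm $A_0$ of length $k$ chosen so large that it dominates the total duration $D = 2M+2m+k$. The arm lengths force any solution of length $D$ to be \emph{waiting-free}, which is possible iff a 3-partition exists; this is why 3-partition is the right base problem, since with set cover the intended YES journey must wait during unused pass/take windows and ``no waiting'' can no longer serve as the certificate. The long arm is traversed in a ``carrying'' mode --- a single availability wave starting exactly when the finish phase begins, on top of synchronized flashing everywhere else --- so that \emph{any} failure earlier, however small, forces at least one wait, makes the agent miss the wave, and converts the dominant $k$ edges of $A_0$ from cost $k$ into cost $\Delta(k-1)+1$. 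Choosing $k = N\delta(2M+2m)$ with $N > \frac{1}{\Delta-\delta}$ then makes the NO optimum exceed $\delta D$ for any $\delta < \Delta$. Your plan identifies the right obstacle but does not construct this local-failure-to-global-penalty amplifier, and without it the reduction does not yield the claimed factor.
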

\begin{proof}
We reduce from 3-partition. Given a multiset of $3m$ positive integers $S = \{s_1,...,s_{3m}\}$, it is strongly NP-complete to decide if they can be partitioned into $m$ sets where all sets have the same sum \cite{Garey}. Let $\sum_{i=1}^{3m} s_i$ = M. Then $B = M/m$ is the required sum for each partition.

Starting with the common central vertex $c$, construct a spider in the following way. For each $s_i \in S$, add a corresponding arm of length $s_i$. Add $m$ arms of length one to be used as checkpoints, and add a single long arm $A_0$ of some length $k > 2M + 2m$ (e.g., Figure \ref{bSpiderExample}). For the TVG used in this proof, arms over any period of time are in one of three modes: $steady$, $flashing$, or $carrying$. When an arm $A$ is steady over a period of time from $\tau$ to $\tau'$, all of its edges are constantly available during that period. When $A$ is flashing, all of its edges synchronously alternate between being unavailable for $\Delta-1$ steps, and available for 1 step, satisfying the time-bounded recurrence constraint. Formally, if $A$ is flashing, then $\forall e \in A, t \in [\tau,\tau'],$
$$
 \rho(e,t) =
  \begin{cases}
   1 & \text{if } t-\tau \equiv p-1 \bmod p, \\
   0 & \text{otherwise.}
  \end{cases}
$$
When $A$ is carrying, all of its edges act as if $A$ were flashing, with the exception that the edge distance $i$ from $c$ is always available at time $\tau + i$, so that starting at time $\tau$, an agent at $c$ can travel along $A$ for $\tau' - \tau$ steps without waiting. Formally, if $A = ca_0...a_l$ is carrying, then $\forall (u,v) \in E(A), t \in [\tau,\tau'],$
$$
 \rho(e,t) =
  \begin{cases}
   1 & \text{if } t-\tau \equiv p-1 \bmod p, \\
   1 & \text{if } v = a_{t-\tau},\\
   0 & \text{otherwise.}
  \end{cases}
$$

Let $take$ be the temporal subgraph of duration $2B$ in which the arms corresponding to $s_i$'s are steady, and all others are flashing. Let $check$ be the temporal subgraph of duration 2 in which all checkpoint arms are steady, and all others are flashing. Let $finish$ be the temporal subgraph of duration $k$ in which all arms are flashing except for $A_0$, which is carrying. Let $\mathcal{G}$ be the TVG formed by alternating between $take$ and $check$ $m$ times, before ending with $finish$. The total duration of $\mathcal{G}$ is $D = 2M + 2m + k$.

Consider the problem of deciding if DMVP over $\mathcal{G}$ with a single agent $a$ starting at $c$ has a solution of length no more than $D$. Since $k > 2M + 2m$, such a solution must take the long arm last, as traversing this arm twice would result in a solution of length greater than $2k > (2M + 2m) + k = D$. Furthermore, since every arm must be traversed twice except the long arm, the topological length of a solution journey can be no less than $2M + 2m + k = D$. So a solution of temporal length no more than $D$ cannot involve waiting.

Suppose there exists a 3-partition of $S$. During each $take$, $a$ can explore the arms of the spider corresponding to one partition, and return to $c$ in exactly $2B$ steps. During the subsequent $check$, $a$ visits one checkpoint arm, and returns to $c$ in the allotted 2 steps. Repeating this process for the remaining $takes$ and $checks$, $a$ will cover all the $s_i$ arms and checkpoint arms without ever waiting, and end up again at $c$. Finally, during $finish$, $a$ takes the long arm $A_0$, reaching its leaf without waiting, completing coverage in $D$ steps. 

Now, suppose this instance of DMVP has a solution of length $D$. To avoid waiting, $a$ must take complete arms and return to $c$ during each $take$, so that it is not stalled by flashing. Similarly, $a$ must take a single checkpoint arm and return to $c$ during each $check$. Doing this $m$ times, $a$ has effectively partitioned the $s_i$ arms into sets each of total length $B$. So, the decision problem is NP-complete, since 3-partition remains NP-complete even when input integers are given in unary. $a$ completes the solution by following the long arm $A_0$, which can be traversed in $k$ steps immediately after $a$ returns to $c$ from the final checkpoint. 

Consider the minimization version of the problem with the same setup. Note that if $a$ does not begin taking $A_0$ right as $finish$ begins, $A_0$ will take at least $\Delta(k-1)+1$ to traverse, this best case occurring when $a$ does not have to wait for the first edge. In particular, if $a$ takes $A_0$ last, but has not visited all other arms before $finish$ starts, visiting those arms must have taken at least $2M + 2m + (\Delta - 1) > 2M + 2m, \forall \Delta > 1$, since $a$ must wait at least once during its traversal. The total cost of the solution is then at least $D' = \Delta(k-1) + 1 + 2M + 2m + (\Delta - 1) = \Delta k + 2M + 2m$. If $a$ does not take $A_0$ last, it must traverse $A_0$ twice, taking at least $\Delta(k-1) + 1 + k$ steps (this best case occurring when $a$ starts taking the long arm out right as $finish$ starts), and once it returns must wait at least once while traversing the remaining arms, making the length of the total solution at least $D'' = \Delta(k-1) + 1 + k + 2M + 2m + (\Delta - 1) = \Delta k + 2M + 2m + k > D'$. Take any real constant $\delta < \Delta$. Choose the least integer $N$ s.t. $N > \frac{1}{\Delta-\delta}$. Let $k = N\delta(2M+2m)$. Then,
				$$(\Delta-\delta)N\delta(2M+2m) > \delta(2M+2m),$$
				$$(\Delta-\delta)k > \delta(2M+2m),$$
				$$ \Delta k > \delta(2M + 2m + k),$$
				$$ \Delta k + 2M + 2m = D' > \delta D, \ \forall \Delta > 1.$$
Therefore, any solution that contains waiting cannot be a $\delta$-approximation. So, finding a $\delta$-approximation is equivalent to finding a solution with no waiting, i.e., a minimal solution, and thus is NP-hard. Hence, the problem is NP-hard to approximate within any factor less than $\Delta$. 
 \end{proof}
 
\begin{figure}
\centering
\includegraphics[scale=1]{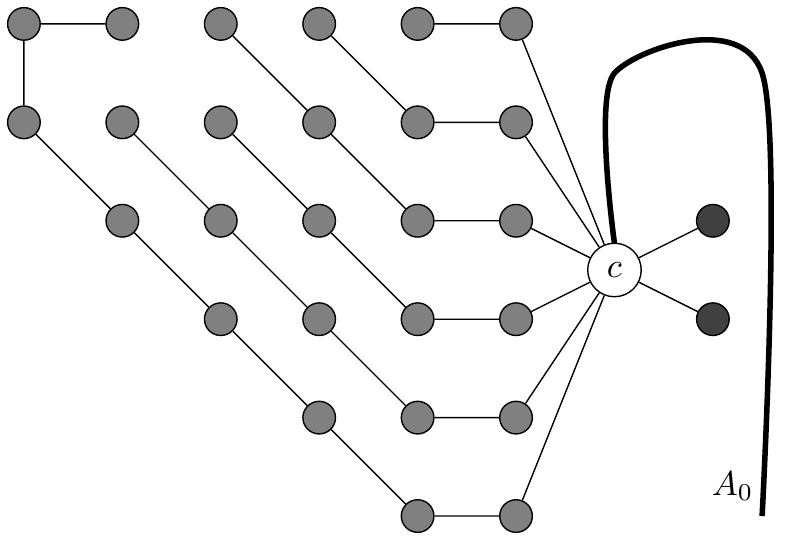}
\caption{Underlying spider for Thm. \ref{bapx} for 3-partition input: $S = \{2,3,4,4,5,8\}$. \label{bSpiderExample}}
\end{figure}

\begin{theorem}
\label{bcomb}
DMVP for a single agent in $\mathcal{B}$ is NP-hard to approximate within any factor less than $\Delta$, even when the underlying graph is a comb, $\forall \Delta > 1$.
\end{theorem}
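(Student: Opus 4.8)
The plan is to adapt the spider construction of Theorem~\ref{bapx} to a comb, exactly as Theorem~\ref{npocomb} adapts the star construction of Theorem~\ref{npostar}. I again reduce from 3-partition on a multiset $S = \{s_1,\dots,s_{3m}\}$ with $\sum_i s_i = M$ and target block sum $B = M/m$. Instead of hanging every arm off a single central vertex $c$, I introduce a backbone $b_0 b_1 \cdots b_L$ and attach each $s_i$ as a \emph{tooth} of length $s_i$ to a distinct backbone vertex, together with $m$ checkpoint teeth of length $1$, placing all of these within an initial backbone segment of length $L = O(m)$; the long arm $A_0$ of length $k > 2M + 2m$ is attached at the far end $b_L$. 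The backbone plays the role that $c$ played in the spider, and $A_0$ again supplies the multiplicative gap that drives the inapproximability.

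For the dynamics I reuse verbatim the \emph{steady}, \emph{flashing}, and \emph{carrying} modes of Theorem~\ref{bapx}. A \emph{take} phase makes the backbone and all $s_i$-teeth steady while the checkpoint teeth and $A_0$ flash; a \emph{check} phase makes the backbone and checkpoint teeth steady while the $s_i$-teeth and $A_0$ flash; a \emph{finish} phase makes $A_0$ carrying and everything else flashing. Because the agent must now walk along the backbone to reach a tooth, I price each take phase so that exactly one round trip over the length-$L$ segment plus one block's worth of teeth (climb-and-descend cost $2B$) is affordable, and I price each check phase to allow reaching a single checkpoint and returning. As in Theorem~\ref{bapx}, I alternate take and check $m$ times and append a single finish, giving a total duration of the form $D = 2M + k + O(mL)$, where the $O(mL)$ term is the fixed backbone overhead that is independent of which blocks are chosen.

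Correctness in both directions then mirrors Theorem~\ref{bapx}. Given a 3-partition, the agent devotes each take phase to one block, walking out along the backbone, climbing exactly that block's teeth (total $2B$), and returning, never stalled because those teeth are steady; it spends each check phase on one checkpoint and runs $A_0$ during finish without waiting because $A_0$ is carrying, for total cost exactly $D$. Conversely, a solution of length $D$ admits no waiting: every tooth except $A_0$ must be traversed twice and the backbone segment $2m$ times, so the topological length already equals $D$, and $A_0$ must be taken last (taking it twice would cost $> 2k > D$). No-waiting during a flashing phase then forces the agent to climb only complete, currently-steady teeth and to return to the backbone, so the take phases partition the $s_i$-teeth into $m$ blocks of total length $B$, recovering a 3-partition. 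Finally, the long-arm scaling of Theorem~\ref{bapx} transfers unchanged: any wait leaves the agent either caught mid-arm by flashing or forced to traverse $A_0$ twice, costing at least $\Delta(k-1)+1 + O(M + mL) = \Delta k + O(M+mL)$, and choosing the least integer $N > 1/(\Delta-\delta)$ with $k = N\delta(2M + O(mL))$ makes this exceed $\delta D$ for every $\delta < \Delta$, so a $\delta$-approximation must be waiting-free and hence solves 3-partition.

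The step I expect to be the main obstacle is the backbone accounting in the reverse direction: I must verify that the freedom to position the agent along the backbone never substitutes for waiting, i.e.\ that the agent cannot climb partial teeth, defer part of a block across a phase boundary, or otherwise beat the $2B$ climbing budget without incurring a stall. Confining all $s_i$ and checkpoint teeth to a short fixed backbone segment, and pricing each take phase at exactly the backbone round trip plus $2B$, is the mechanism I would rely on to rule this out; making that pricing watertight against clever backbone maneuvers is the delicate part of the argument.
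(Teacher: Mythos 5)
You have correctly identified the main obstacle, but your proposal does not contain the idea that overcomes it, and the mechanism you propose (pricing each take phase at exactly one backbone round trip plus $2B$) does not work. The threat is quantitative: in $\mathcal{B}$ a single wait costs only $\Delta-1$ extra steps (just $1$ when $\Delta=2$), while your construction carries backbone slack of order $\Theta(mL)=\Theta(m^2)$ --- the gap between the budgeted round trip $2L$ per take phase and the backbone distance actually needed for a given block, plus the freedom to end phases at arbitrary backbone positions, climb partial teeth, and pre-climb flashing teeth during leftover time at cost $\Delta$ per edge. With unscaled teeth of length $s_i$, an instance with no 3-partition (say block sums $B+1, B-1, B, \dots$) can still admit a length-$D$ journey: the overflow of a few tooth-edges traversed while flashing costs only $O(\Delta)$, which is easily absorbed by the $\Theta(m^2)$ slack, so the reverse direction of your reduction fails. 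Note also that your claim ``the topological length already equals $D$, hence no waiting'' --- which is how Theorem~\ref{bapx} kills slack on the spider --- is simply unavailable on a comb, because the number of backbone traversals is route-dependent rather than forced.

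The paper's proof closes exactly this hole with two ingredients absent from your proposal. First, it \emph{scales} every $s_i$-arm to length $l\cdot s_i$ with $l = \frac{7m^2}{2} - \frac{3m}{2} + 4$, chosen strictly larger than the total backbone travel $\frac{7m^2}{2}-\frac{3m}{2}+1$ of the intended solution. Since all arm lengths are multiples of $l$, any deviation from a perfect 3-partition forces at least $l$ arm-edges to be traversed while flashing, incurring cost at least $(\Delta-1)l$, which for every $\Delta>1$ exceeds \emph{all} backbone savings combined; so waiting can never be traded against backbone maneuvers. Second, it rigidifies the schedule: the checkpoints are attached alternately near the two ends of the backbone (at $b_{\frac{m}{2}-\frac{i-1}{2}}$ and $b_{\frac{7m}{2}+\frac{i}{2}}$) with check-phase durations $i + (i \bmod 2) + 2$ that grow with $i$, forcing the agent into a fixed back-and-forth sweep with essentially no positional freedom. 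Your analogy with the pair (Theorem~\ref{npostar}, Theorem~\ref{npocomb}) is what misleads you: in $\mathcal{R}$ the adversary can make edges vanish forever, so the accounting is coarse and the comb extension is routine; in $\mathcal{B}$ every wait is cheap and bounded, so the comb extension genuinely requires the scaling argument, which is the missing idea here.
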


\begin{proof}
We use a similar extension to that for $\mathcal{R}$ to extend this result from spider to a comb with \emph{long enough arms}. We again reduce from 3-partition. Given a multiset of $3m$ positive integers $S = \{s_1,...,s_{3m}\}$, it is strongly NP-complete to decide if they can be partitioned into $m$ sets where all sets have the same sum \cite{Garey}. This result clearly still holds even when $m$ is even. So suppose $m$ is even and, let $\sum_{i=1}^{3m} s_i$ = M. Then $B = M/m$ is the required sum for each partition.

Let $l = \frac{7m^2}{2} - \frac{3m}{2} + 4$. Starting with a backbone $\beta=b_1...b_{4m+1}$, construct a comb in the following way: For each $s_i \in S$, add a corresponding arm of length $ls_i$ attached to $\beta$ at $b_{\frac{m}{2} + i}$. Add $m$ arms $c_1,...,c_m$ of length one to be used as checkpoints, with $c_i$ attached at $b_{\frac{m}{2}-\frac{i-1}{2}}$ if $i$ is odd, and $b_{\frac{7m}{2} + \frac{i}{2}}$ if $i$ is even. Add a single long arm of some length $k > 2lBm + \frac{7m^2}{2} - \frac{2m}{2} + 3$, attached at $b_{4m+1}$. For the TVG used in this proof (e.g., Figure \ref{bCombExample}), arms over any period of time are in one of three modes: $steady$, $flashing$, or $carrying$ (see the proof of Theorem \ref{bapx}).

Assume all edges in $\beta$ be available at all times, unless stated otherwise. Let $take$ be the temporal subgraph of duration $2lB + 3m$ in which the arms corresponding to $s_i$'s are steady, and all others are flashing. Let $check(j)$ be the temporal subgraph of duration $i + (i \bmod 2) + 2$ in which checkpoint $c_j$'s arm is steady, and all others are flashing. Let $finalcheck$ be the temporal subgraph of duration $\frac{m}{2} + 2$ in which checkpoint $c_m$'s arm is steady, and all others are flashing.  Let $finish$ be the temporal subgraph of duration $k$ in which $\beta$ is flashing, and all arms are flashing except for the long arm of length $k$, which is carrying. Let $\mathcal{G}$ be the TVG $take,check(1),take,check(2),...,take,check(m-1),take,finalcheck,finish$. The duration of $\mathcal{G}$ up until the start of $finish$ is $d = 2lBm + 3m^2 + \sum_{i=1}^m (i + (i \bmod 2) + 2) + \frac{m}{2} + 3 = 2lBm + \frac{7m^2}{2} - \frac{2m}{2} + 3$. The total duration of $\mathcal{G}$ is $D = d + k$.

Consider the problem of deciding whether DMVP over $\mathcal{G}$ with a single agent $a$ starting at $b_\frac{7m}{2}$ has a solution of length no more than $D$. Since $k > 2lBm + \frac{7m^2}{2} - \frac{2m}{2} + 3$, such a solution must take the long arm last, as traversing this arm twice would result in a solution of length greater than $2k > (2lBm + \frac{7m^2}{2} - \frac{2m}{2} + 3) + k = D$. 

Suppose there exists a 3-partition of $S$. During the $j$th $take$, if $j$ is odd (even), starting at $b_\frac{7m}{2}$ ($b_{\frac{m}{2}+1}$), $a$ can explore the arms of the spider corresponding to one partition, and end up at $b_{\frac{m}{2}+1}$ ($b_\frac{7m}{2}$) in exactly $2Bl + 3m$ steps. During the subsequent $check(j)$, $a$ visits $c_j$, and returns to $b_{\frac{m}{2}+1}$ ($b_\frac{7m}{2}$) in exactly $i + (i \bmod 2) + 2$ steps. During $finalcheck$, $a$ travels from $b_\frac{7m}{2}$ to $c_m$ and then to $b_{4m+1}$ in the allotted $\frac{m}{2} + 2$ steps. Finally, during $finish$, $a$ takes the long arm, reaching its leaf without waiting, completing coverage in $D$ steps. 

Now, suppose this instance of DMVP has a solution of length $D$. If there is no 3-partition of $S$, then $a$ must wait during traversal of at least one arm corresponding to an $s_i$. Since the length of this arm is $ls_i$, $a$ must in fact wait for at least $l$ edges of this arm, incurring a cost of $(\Delta-1)l = (\Delta-1)(\frac{7m^2}{2} - \frac{3m}{2} + 2)$. Since in the length $D$ solution described above $a$ never waits on an arm, this incurred cost must be made up for by minimizing traversal of edges in $\beta$. However, in the length $D$ solution described above, $a$ only traverses $\beta$ for a total of $\frac{7m^2}{2}-\frac{3m}{2}+1 < (\Delta-1)l$ steps, $\forall \Delta > 1$. Therefore, the solution must be in the form of the solution described above in which a 3-partition of $S$ does indeed exists. So, the decision problem is NP-complete, since 3-partition remains NP-complete even when input integers are given in unary.

Consider the minimization version of the problem with the same setup. Note that if $a$ does not begin taking the long arm right as $finish$ begins, the long arm will take at least $\Delta(k-1)+1$ to traverse, this best case occurring when $a$ does not have to wait for the first edge. In particular, if $a$ takes the long arm last, but has not visited all other arms before $finish$ starts, visiting those arms must have taken at least $d + (\Delta - 1)$, since $a$ must wait at least once during their traversal, and the total cost of the solution is then at least $D' = \Delta(k-1) + 1 + d + (\Delta - 1) = \Delta k-1 + d$. If $a$ does not take the long arm last, it must traverse the long arm twice, taking at least $\Delta(k-1) + 1 + k$ steps (this best case occurring when $a$ starts taking the long arm right as $finish$ starts), and once it returns must wait at least once while traversing the remaining arms, making the length of the total solution at least $D'' = \Delta(k-1) + 1 + k + d + (\Delta - 1) = \Delta (k+1) + d > D'$. Take any real constant $\delta < \Delta$. Choose the least integer $N$ s.t. $N > \frac{1}{\Delta-\delta}$. Let $k = N\delta d$. Then 
				$$(\Delta-\delta)N\delta d > \delta d,$$
				$$(\Delta-\delta)k > \delta d,$$
				$$ \Delta k > \delta d + \delta k,$$
				$$ \Delta k + d = D' > \delta D, \forall \Delta > 1.$$
Therefore, any solution that contains waiting cannot be a $\delta$-approximation. So, finding a $\delta$-approximation is equivalent to finding a solution with no waiting, i.e., a minimal solution, and thus is NP-hard. Hence, the problem is NP-hard to approximate within any factor less than $\Delta$. 
 \end{proof}
 
 \begin{figure}
\centering
\includegraphics[scale=1]{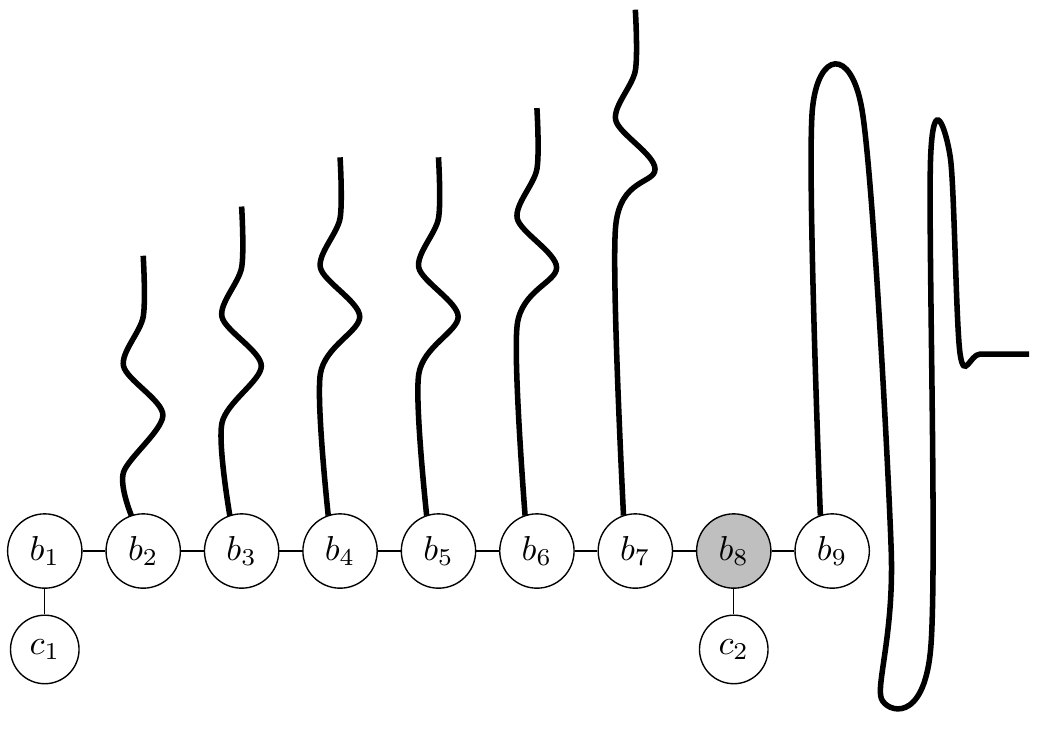}
\caption{3-partition underlying comb for some $S$ with $\lvert S \rvert = 6$. Agent starts at $b_8$. \label{bCombExample}}
\end{figure}

As is shown in Section~\ref{upperBounds}, there is a much greater potential for tractability of DMVP in $\mathcal{P}$ than in $\mathcal{B}$ or $\mathcal{R}$. However, the next result follows immediately via reduction from hamiltonian path by simply restricting $t$ to $n-1$.

\begin{theorem}
\label{phard}
DMVP for a single agent in $\mathcal{P}$ is NP-complete, when $p=1$.
\end{theorem}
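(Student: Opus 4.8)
The plan is to prove membership in NP and NP-hardness separately, with the hardness coming from a direct many-one reduction from \textsc{Hamiltonian Path}. The crucial first observation is that the constraint $p = 1$ in the definition of $\mathcal{P}$ forces $\rho(e,t) = \rho(e, t + k)$ for every $k \in \mathbb{N}$, so the presence function is time-invariant: every edge is either always present or always absent. Hence a period-$1$ TVG is nothing but a static connected graph, and a journey starting at time $0$ is simply a walk in that static graph whose temporal length equals the number of moves made (waiting only increases temporal length without covering new vertices, so an optimal journey never waits). This reduces the decision problem $\mbox{\emph{DMVP}}(\mathcal{G}, S, t)$ with $p = 1$ to asking whether a single walk from the agent's start can cover all $n$ vertices within $t$ steps.

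For the reduction, I would start from the rooted \textsc{Hamiltonian Path} problem: given a connected graph $G = (V,E)$ on $n$ vertices and a designated source $s$, decide whether $G$ has a Hamiltonian path beginning at $s$. (We may assume $G$ is connected, since disconnectedness is checkable in polynomial time and precludes any Hamiltonian path.) Build the TVG $\mathcal{G}$ whose underlying graph is $G$ with every edge permanently available, which lies in $\mathcal{P}$ with $p = 1$ and period constant presence; set $S = \{s\}$ and ask whether $\mbox{\emph{DMVP}}(\mathcal{G}, \{s\}, n-1)$ has a solution. The correctness argument is a counting argument: a journey of temporal length at most $n-1$ performs at most $n-1$ moves and therefore visits at most $n$ vertices (including $s$); to cover all $n$ distinct vertices of $V$ it must make exactly $n-1$ moves, each arriving at a previously unvisited vertex, i.e.\ it must trace a simple path visiting every vertex exactly once starting at $s$ -- precisely a Hamiltonian path rooted at $s$. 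Conversely, any Hamiltonian path from $s$ yields a covering journey of temporal length $n-1$. Thus the DMVP instance is a \emph{yes} instance if and only if $G$ admits a Hamiltonian path from $s$.

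Membership in NP is routine and mirrors the earlier theorems: given a candidate journey, one checks in polynomial time that it hits every vertex, uses only available edges at the stated times, and has temporal length at most $t$; the relevant time horizon is polynomial by Observation~\ref{2n} (and trivially so here, since a period-$1$ instance needs only its single snapshot examined). I do not anticipate a genuine technical obstacle, which is why the statement is described as following ``immediately''; the only point requiring a little care is the fixed starting vertex. Since DMVP's single agent starts at a prescribed location, the natural source is the rooted version of \textsc{Hamiltonian Path}, which is itself NP-complete. If one prefers to reduce from the unrooted version, augment $G$ with a fresh vertex $s$ adjacent to all of $V$ and set $t = n$: a Hamiltonian path in $G$ exists if and only if the augmented graph has a Hamiltonian path from $s$, since $s$ (being visited once at the outset) simply picks out one endpoint of a Hamiltonian path of $G$. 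Either formulation closes the reduction and establishes NP-completeness.
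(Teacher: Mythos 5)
Your proposal is correct and takes essentially the same route as the paper: both hinge on the observation that $p=1$ collapses the TVG to a static graph, at which point hardness comes from Hamiltonian Path (the paper's one-line proof simply cites the NP-completeness of static MVP from \cite{MVP}, and its preceding discussion notes the same restriction $t = n-1$ you use). The only difference is that you spell out the rooted Hamiltonian Path reduction and NP membership explicitly rather than delegating them to the cited result, which makes your version self-contained but not a different argument.
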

\begin{proof}
$p=1$ is simply the static case, so the theorem follows immediately from the result that MVP is NP-complete for a single agent on general graphs \cite{MVP}.
\end{proof}

DMVP in $\mathcal{P}$ for $p=1$ is then also NP-complete for all classes of graphs for which hamiltonian path is NP-complete, in particular, planar graphs of maximum degree 3, bridgeless undirected planar 3-regular bipartite graphs, and 3-connected 3-regular bipartite graphs \cite{Saito80}. To show that $\mathcal{P}$ is an interesting dynamics class for DMVP in its own right, it is important to show that DMVP yields different hardness results over $\mathcal{P}$ than over static graphs. Thus, we construct a class of graphs for the following result:

\begin{theorem}
\label{p2}
There is an infinite class of graphs $C$ such that DMVP for a single agent in $\mathcal{P}$ over graphs in $C$ is NP-complete when $p=2$, but trivial when $p=1$.
\end{theorem}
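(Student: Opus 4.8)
The plan is to prove both halves separately, with the $p=2$ lower bound carrying all the weight. Membership in NP is immediate: a journey is a polynomial-size witness (by Observation~\ref{2n} only $O(n)$ time steps per snapshot matter), and checking that it covers $V$ and respects edge availability is poly-time. For the $p=1$ half, I will engineer the class $C$ so that every $G \in C$ is bipartite \emph{and} contains an easily identifiable Hamiltonian path beginning at the designated start vertex. Since $p=1$ is exactly static MVP on $G$ (all edges present at all times), the optimum is then $\lvert V \rvert - 1$, achieved by walking that built-in path; so the $p=1$ instance is solved trivially, in contrast to the general-graph NP-completeness of Theorem~\ref{phard}.

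The engine for the $p=2$ hardness is the observation that period-$2$ availability on a bipartite graph \emph{orients} the edges once waiting is forbidden. Concretely, I will reduce from Hamiltonian Path and set the deadline $t$ equal to the length $L$ of the intended no-wait traversal. With the agent starting on side $X$ of the bipartition, any journey that never waits is parity-locked: it sits in $X$ at even times and in $Y$ at odd times. Hence an edge present only at even (respectively odd) times is crossable only from $X$ to $Y$ (respectively $Y$ to $X$). Since the topological length of any covering walk is at least $L$, a journey of temporal length $\le L$ cannot wait, so every feasible solution is parity-locked and reads off a directed walk in the orientation induced by the period-$2$ schedule.

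The reduction itself: split each vertex $v$ of the Hamiltonian Path instance $H$ into $v^{\mathrm{in}} \in X$ and $v^{\mathrm{out}} \in Y$, joined by a \emph{pair edge} made even-available so that it can only be crossed $v^{\mathrm{in}} \to v^{\mathrm{out}}$, which is what forces $v$ to be ``visited.'' For each edge $\{u,v\}$ of $H$, add \emph{transition edges} made odd-available, permitting $u^{\mathrm{out}} \to v^{\mathrm{in}}$ and $v^{\mathrm{out}} \to u^{\mathrm{in}}$. Finally add a spanning \emph{backbone} path so that the underlying graph is traceable (guaranteeing the $p=1$ triviality above) and remains bipartite. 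I would then argue the two directions in the usual way: a Hamiltonian path in $H$ yields a no-wait, parity-locked journey of length exactly $L$ that threads the pair edges and transition edges in order; conversely, any journey meeting the deadline is parity-locked and its sequence of visited gadgets is forced to be a Hamiltonian path of $H$.

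The hard part will be neutralizing the backbone. It is required for the $p=1$ triviality, but its edges are still present under $p=2$, and whatever parity I assign them makes them crossable in one direction, threatening to let the agent shortcut the gadget structure and cover $G$ without a genuine Hamiltonian path. The crux of the proof is therefore to choose the backbone's period-$2$ schedule so that its edges are only ever crossable in a ``wasteful'' orientation (e.g.\ pointing back into already-required gadget ports), and then to prove the exact equivalence between deadline-feasibility and Hamiltonicity of $H$ in the presence of these extra edges. I expect this to need a careful parity-and-counting argument; a promising way to make it clean is to reduce from \emph{directed} Hamiltonian Path and subdivide each vertex into an $\mathrm{in}$/$\mathrm{mid}$/$\mathrm{out}$ gadget, attaching the backbone only at the ports so that it can never bypass the forced ``visit'' edge through $v^{\mathrm{mid}}$, and setting $t$ tight enough that any use of a backbone edge strictly overshoots $L$.
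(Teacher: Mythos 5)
Your high-level strategy is the same as the paper's: reduce from Hamiltonian Path, set the deadline equal to the minimum possible topological length of a covering walk so that no feasible solution can wait, and exploit the fact that on a bipartite graph a period-2 schedule effectively \emph{orients} every parity-restricted edge for a never-waiting agent. But your proposal has a genuine gap, and it is exactly the one you yourself flag as ``the crux'': neutralizing the backbone. This is not a loose end that a routine argument will close. Every backbone edge must join an $X$-vertex to a $Y$-vertex (to preserve bipartiteness), say some $u^{\mathrm{in}}$ to some $w^{\mathrm{out}}$ with $u \neq w$. If you make it odd-available, it is crossable $w^{\mathrm{out}} \to u^{\mathrm{in}}$ and behaves exactly like a transition edge for the (possibly absent) pair $\{w,u\}$, i.e., it injects phantom edges of $H$ and every instance becomes a yes-instance. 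If you make it even-available, it is crossable $u^{\mathrm{in}} \to w^{\mathrm{out}}$, and a parity-locked covering walk can then interleave such steps with genuine transition steps: a segment $u^{\mathrm{in}} \to w^{\mathrm{out}} \to z^{\mathrm{in}}$ certifies the $H$-edge $\{w,z\}$ but says nothing relating $u$ to $w$ or $z$, so the visit order of the in-vertices need no longer trace a Hamiltonian path of $H$. Either way the converse direction of your reduction fails as stated, and your proposed repair (directed Hamiltonian Path with in/mid/out gadgets attached to ports) is a sketch whose required equivalence is precisely what remains unproven.

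For contrast, the paper's construction dissolves this tension rather than fighting it: it keeps every edge of the Hamiltonian Path instance $G$ always available (type 11) and adds $n$ checkpoint vertices $c_0,\dots,c_{n-1}$ with parity-scheduled edges $(v_i,c_i)$, $(v_i,c_{i+1})$, $(c_i,c_{i+1})$. The same added structure does both jobs. Statically it contains the Hamiltonian path $v_0 c_1 v_1 c_2 \cdots c_{n-1} v_{n-1} c_0$, which makes $p=1$ trivial; under $p=2$ the schedule is arranged so that whenever the agent arrives at any $c_i$ without waiting, the only edge immediately available is $(c_i,c_{i+1})$, so the checkpoint cycle is a one-way trap: entering it before all of $v_0,\dots,v_{n-1}$ are covered forces at least one wait, which the deadline $2n-1$ forbids. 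Hence within the deadline the agent must first traverse a Hamiltonian path of $G$ itself and only then thread the checkpoint cycle. If you want to salvage your vertex-splitting construction, you need the analogous property: the edges added for traceability must never be immediately usable to re-enter the unvisited portion of the gadget structure --- a property a backbone that necessarily runs through every gadget does not have.
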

\begin{proof}
Given a graph $G$ with an even number of vertices arbitrarily ordered $v_0,...,v_{n-1}$, construct a corresponding graph $G' \in C$ by adding $n$ new vertices $c_0,...,c_{n-1}$, and adding the edges $(v_i,c_i), (v_i,c_{i+1}),$ and $(c_i,c_{i+1})$ for all $0 < i < n$, where indices are taken mod $n$. 

To show the problem is NP-complete for a single agent in $\mathcal{P}$ over graphs in $C$, with $p=2$, we reduce from the hamiltonian path problem \cite{Karp}. Consider a graph $G$ with an even number of vertices $n$, and one of those vertices $v_0$, with the problem of deciding whether $G$ contains a hamiltonian path starting at $v_0$. Take the graph $G' \in C$ corresponding to $G$ as the underlying graph of $\mathcal{G}$. $\mathcal{G}$ begins at time 0. In $\mathcal{P}$ with $p=2$, traversable edges can only be one of three possible types: (01) available at odd times but not even times, (10) available at even times but not odd times, (11) available at all times. Let all original edges of $G$ be of type 11. Let $(v_i,c_i)$ be of type 01 when $i$ is even and type 10 when $i$ is odd. Let $(v_i,c_{i+1})$ be of type 10 when $i$ is even and 01 when $i$ is odd. Let $(c_i,c_{i+1})$ be of type 10 when $i$ is even and 01 when $i$ is odd (see Figure \ref{P2Example}). 

Consider the problem of deciding if DMVP over $\mathcal{G}$ for a single agent $a$ starting at $v_0$ has a solution of length no more than $2n-1$, i.e., a solution with no waiting, and in which each vertex is visited exactly once. This problem is clearly in NP. If there is a hamiltonian path in $G$ from $v_0$ vertex $v_i$, then this path will be constantly available in $\mathcal{G}$. $a$ can take this path in $n-1$ steps, and, ending at an odd time, immediately follow $(v_{i},c_{i})$ if $i$ is even, or $(v_{i},c_{i+1})$ if $i$ is odd, then follow either the path $c_ic_{i+1}c_{i+2}...c_{i-1}$ or $c_{i+1}c_{i+2}c_{i+2}...c_i$, the edges for which are always available as $a$ reaches the incoming vertices, thus completing the overall traversal in exactly $2n-1$ steps. Suppose there is a solution to this problem of length $2n-1$. By construction, if $a$ moves to any $c_i$ before covering every $v_j$, $a$ must then wait at least once at some $c_k$ before visiting any further $v_l$. This is because for all $c_i$, once $c_i$ is reached via either $(v_{i-1},c_i)$, $(v_i,c_i)$, or $(c_{i-1},c_i)$ the only edge that can be immediately taken without waiting is $(c_i,c_{i+1})$. So $a$ must visit all $v_j$ exactly once without visiting any $c_i$, thus following a path corresponding to a hamiltonian path in $G$.
	
However, if we consider the same setup over $G'$ but with $p = 1$, $v_0c_1v_1c_2v_2...$\\$c_{n-1}v_{n-1}c_0$ is always an optimal solution.
\end{proof}

\begin{figure}
\begin{center}
\includegraphics[scale=1]{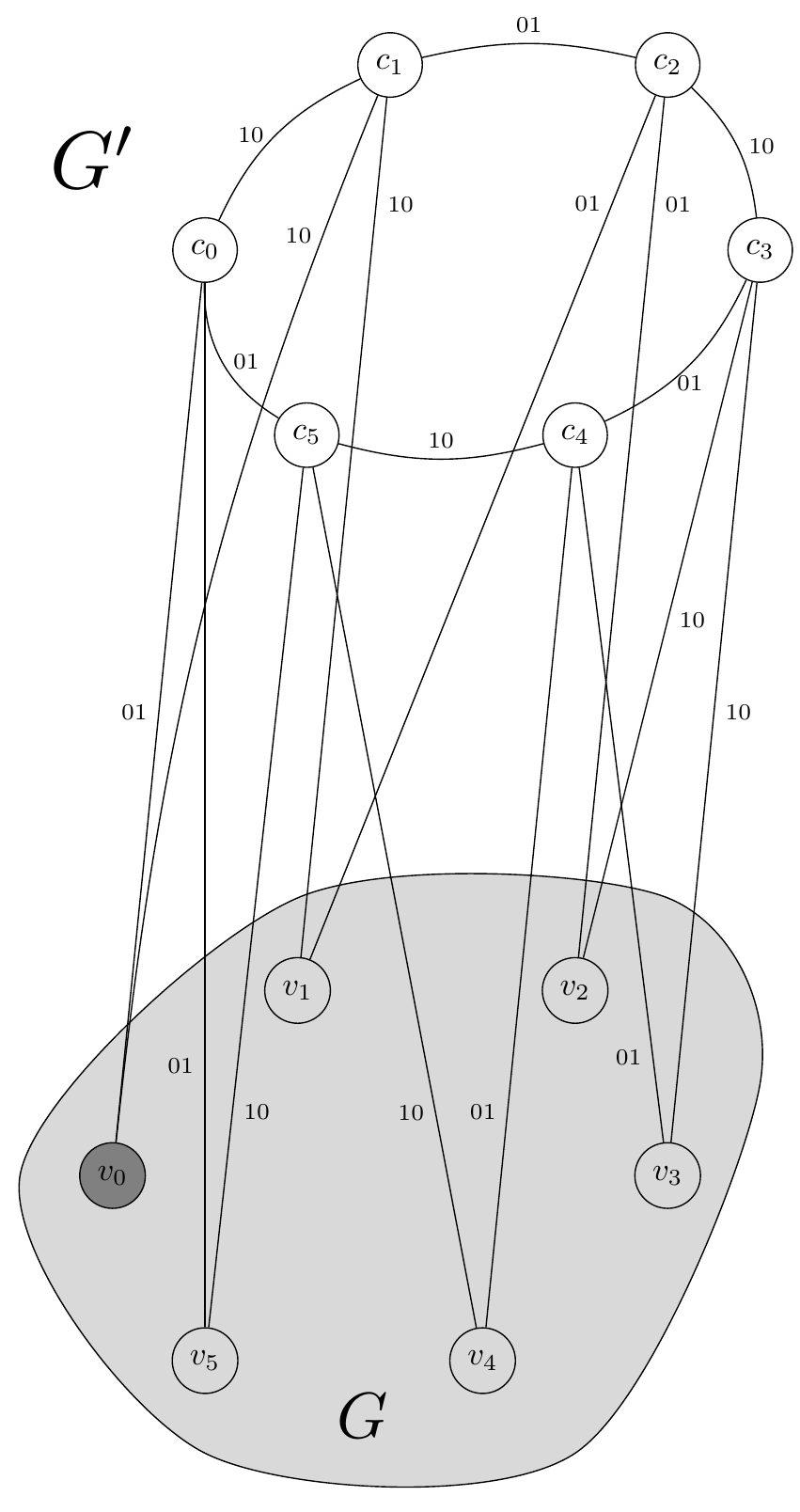}
\end{center}
\caption{$\mathcal{G}$ (from Thm. \ref{p2}) with underlying graph $G'$ constructed from some six-node graph $G$. Edges labeled 10 are available at even times; 01 at odd times. \label{P2Example}}
\end{figure}

\section{Upper Bounds}
\label{upperBounds}

In this section, we map out a class of graphs over which DMVP in $\mathcal{R}$ is solvable in polynomial time. We first start with a very useful lemma. Note that a related observation (about turning around on a ring) was made in \cite{Wade13}.

\begin{lemma}[Turning around lemma]
\label{turnAround}
There is always an optimal solution $J$ that never turns around at a degree 2 vertex of the edge-induced subgraph of $J$ in $G$.
\end{lemma}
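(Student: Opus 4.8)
The plan is to take an arbitrary optimal solution $J$ and transform it, via a local surgery that never increases its temporal length, into one that satisfies the stated property. I would begin by clarifying the object: let $H$ be the edge-induced subgraph of the underlying walk of $J$ in $G$, and consider a vertex $w$ that has degree $2$ in $H$, with incident edges $(x,w)$ and $(w,y)$. Saying that $J$ ``turns around'' at $w$ means that at some point the underlying walk arrives at $w$ along one of these edges, say $(x,w)$, and immediately departs back along the same edge $(w,x)$, rather than continuing along the other available edge $(w,y)$. The goal is to argue that such a turnaround is never forced on an optimal solution.

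The key idea I would pursue is an exchange argument. Since $w$ has degree exactly $2$ in $H$, both edges $(x,w)$ and $(w,y)$ are traversed by $J$ at some point (that is what it means for them to be in the edge-induced subgraph). So in the full journey, the edge $(w,y)$ is crossed during some other portion of the walk. The plan is to reroute so that when the agent first reaches $w$ with the intent of turning around, it instead continues through to $y$ (using the recurrence property of $\mathcal{R}$ to guarantee $(w,y)$ reappears, so the agent can wait at $w$ if necessary for $(w,y)$ to become available), thereby fusing the two separate visits to the $(w,y)$ edge into the main pass and eliminating the wasteful backtrack. Concretely, I would case on whether the second traversal of the relevant edges is ``downstream'' or ``upstream'' of the turnaround in the journey, and splice the walk accordingly, possibly reversing a contiguous segment of the journey. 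The crucial accounting is that reversing a segment of a walk in $G$ preserves its underlying topological structure, and because edges in $\mathcal{R}$ recur, any reversed or resequenced segment can still be realized as a valid journey by inserting waiting where needed.

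I expect the main obstacle to be the temporal feasibility bookkeeping: unlike in a static graph, a rerouted walk must still respect the presence function $\rho$, and reversing or resplicing a segment changes the times at which each edge must be available. The heart of the argument must show that the surgery does not increase the arrival date at the final uncovered vertex. I would handle this by arguing that the turnaround and its paired second traversal of $(w,y)$ together cost at least as much time as the fused single pass does, leveraging that eliminating the backtrack removes topological length and that recurrence lets the agent recover any lost synchronization by waiting, never paying more than the segment originally cost. A delicate point is ensuring the surgery is well-defined when the degree-$2$ vertex is visited multiple times or when several turnarounds interact; I would address this by applying the exchange repeatedly, choosing (say) the earliest turnaround each time and arguing the process terminates because each application strictly decreases the total number of turnarounds (or a suitable potential such as topological length), while never increasing temporal length, so the final solution is both optimal and turnaround-free.
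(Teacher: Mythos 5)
Your high-level plan (exchange argument, repeated application, a decreasing potential) is fine, but the surgery you chose has a genuine gap that cannot be repaired in class $\mathcal{R}$. Your reroute reschedules edge traversals: pushing the agent through $w$ to $y$ on first arrival, or reversing/splicing a segment, means every edge in the affected portion of the journey must now be crossed at a different time than in the original solution. Recurrence guarantees each edge reappears \emph{eventually}, but with no bound whatsoever, so the wait for a rescheduled edge can exceed the entire duration of the original optimal journey. Hence your central accounting claim --- that the turnaround plus the paired traversal of $(w,y)$ costs at least as much as the fused pass, and that lost synchronization can be recovered ``never paying more than the segment originally cost'' --- is not merely unproven; it is false in general, so the exchange can strictly increase temporal length and cannot carry an optimal solution to an optimal solution. (This unbounded-waiting phenomenon is exactly what the paper exploits to prove inapproximability in $\mathcal{R}$, Theorems \ref{npostar} and \ref{npocomb}.) There is also a splicing defect: when the other traversal of $(w,y)$ runs from $y$ \emph{into} $w$, reversing the intervening segment leaves the walk ending at the wrong vertex, so your fused walk does not reconnect with the remainder of the journey.

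The paper's proof avoids all of this with a surgery that introduces no new edge traversals and re-times nothing. If the agent takes $(x,w)$ at time $\tau$ and immediately turns around, taking $(w,x)$ at time $\tau'$, replace this excursion by having the agent simply \emph{wait} at $x$ from $\tau$ until $\tau'+1$. The agent then occupies the same vertex at the same time as before, every other edge traversal keeps its original timestamp (so feasibility is trivially preserved and no presence-function bookkeeping is needed), and $w$ is still covered because the other incident edge $(w,y)$ lies in the edge-induced subgraph and is therefore crossed at some other point of the journey. Temporal length is unchanged, so optimality is preserved, and your termination argument can be kept verbatim. The idea you were missing is directional: do not fuse the two visits to $w$ into one pass; delete the excursion entirely and let the other, already-scheduled traversal do the covering.
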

\begin{proof}
Suppose $v$ is a degree 2 vertex with neighbors $u,w$ in the edge-induced subgraph of $J$ in $G$. Suppose agent $a$ takes edge $(u,v)$ at time $\tau$, then \emph{turns around}, taking $(u,v)$ at time $\tau'$ as the very next edge in its traversal. Since $(v,w)$ is in the edge-induced subgraph of $J$, $a$ must traverse $(v,w)$ at some other time, thereby reaching $v$ at that time. So, $a$ could have waited at $u$ from times $\tau$ to $\tau' + 1$, instead of taking $(u,v)$ at time $\tau$, and the solution would still be optimal (see Figure~\ref{waysToCover}). 
\end{proof}

\begin{figure}[t]
\begin{center}
\includegraphics[scale=1.8]{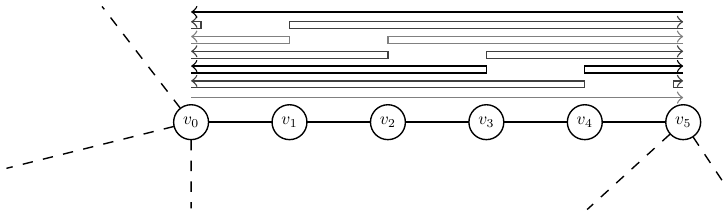}
\end{center}
\caption{The 7 ways, satisfying Lemma 1, of covering the vertices of a length 5 path with degree 2 intermediate nodes. \label{waysToCover}}
\end{figure}

We apply Lemma~\ref{turnAround} to get the following solvability results for restricted classes of underlying graphs.

\begin{theorem}
\label{line}
DMVP for a single agent in $\mathcal{R}$ on a path is solvable in $O(T)$ time.
\end{theorem}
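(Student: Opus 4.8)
The plan is to use the Turning Around Lemma (Lemma~\ref{turnAround}) to collapse the space of candidate optimal journeys down to just two, and then compute each candidate with a single forward time-scan.

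First I would fix the labeling $v_1 v_2 \cdots v_n$ of the path and let the agent start at $s = v_j$. Since every vertex must be covered, any solution journey must reach both endpoints $v_1$ and $v_n$, so its edge-induced subgraph is the entire path and every internal vertex has degree $2$. By Lemma~\ref{turnAround} there is an optimal journey that reverses direction only at $v_1$ or $v_n$. Tracing such a journey from $s$: its first move commits it to reaching one endpoint before any reversal (it cannot turn around at the internal degree-$2$ vertices in between); after reversing once it must run straight to the opposite endpoint, at which point all vertices are covered and it stops. Hence an optimal journey has one of exactly two shapes: go left to $v_1$ and then right to $v_n$ (\emph{left-first}), or go right to $v_n$ and then left to $v_1$ (\emph{right-first}); the cases $s = v_1$ and $s = v_n$ are degenerate instances in which the first leg is empty.

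Next I would argue that each shape is optimized greedily: reach the first turning endpoint as early as possible, then take a foremost monotone journey to the other endpoint. The key step is a monotonicity claim: for rightward motion with waiting allowed, the earliest arrival time at $v_n$ when departing $v_1$ at time $t$ is non-decreasing in $t$. This follows by a backward induction over the vertices, using that the earliest availability of a fixed edge at or after time $t$ is itself non-decreasing in $t$ (recurrence guarantees it exists), so the composition of these per-edge ``wait-then-step'' maps is monotone. Given this, if an optimal left-first journey reaches $v_1$ at time $\alpha^*$ and $v_n$ at time $\beta^*$, then arriving at $v_1$ at the foremost time $\alpha \le \alpha^*$ and running foremost to $v_n$ arrives no later than $\beta^*$. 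The symmetric argument handles right-first, so taking the minimum completion time over the two shapes yields a global optimum.

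Finally I would describe the computation and its cost. Each leg is computed by the obvious scan: maintain a current vertex and current time, and for each successive edge advance the current time to that edge's earliest availability at or after it, then step across. Because the current time only increases and is bounded by $T$ (using the $O(1)$ presence lookups guaranteed by the preprocessing in Section~\ref{modelAndProblem}), the total work of all forward scans in one foremost-journey computation is $O(T)$, and only a constant number of such computations are needed, giving $O(T)$ overall. I expect the main obstacle to be stating and proving the monotonicity claim cleanly, since it is precisely what licenses the greedy ``go foremost to the turning point'' step; the reduction to two shapes via Lemma~\ref{turnAround} and the running-time accounting are then routine.
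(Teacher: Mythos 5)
Your proposal is correct and follows essentially the same route as the paper: reduce to the two candidate journeys (left-first and right-first) via the turning-around lemma, compute each by greedy foremost traversal, and take the minimum, all in $O(T)$ time. The only difference is that you explicitly state and justify the monotonicity claim (later departure cannot yield earlier arrival) that licenses the foremost-then-foremost greedy step, a point the paper's proof asserts without elaboration.
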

\begin{proof}
Consider DMVP for a single agent $a$ with underlying graph $G$ the path $v_0v_1...v_n$, and $a$ starting at $v_k$. To reach $v_0$, $a$ must cover all $v_{k-1},...,v_1$ along the way. Similarly, to reach $v_n$, $a$ must cover all $v_{k+1},...,v_{n-1}$. By Lemma \ref{turnAround}, an optimal solution can be constructed by first taking a foremost journey to either $v_0$ or $v_n$, then taking the foremost journey to the remaining endpoint. One of these two topological journeys, called the $left$ and $right$ journeys, must embody an optimal solution, but in the worst case edge availability must be checked for all $t \in \mathcal{T}$, yielding an $O(T)$ runtime. 
\end{proof}
\begin{algorithm}[t]
\caption{DMVP-Line($\mathcal{G}, \{v_k\})$ \label{linealg}}
\begin{algorithmic}
\State $lLoc = rLoc = k$ \Comment{Both possible journeys start at $v_k$}
\State $lTurned = rTurned = complete = False$
\State $t = 0$
\While{not $complete$}
	\If{not $lTurned$} \Comment{Advance left journey if possible}
		\If{$\rho((v_{lLoc},v_{lLoc-1}),t) = 1$}
			\State $lLoc = lLoc - 1$
			\If{lLoc = 0} \Comment{Left endpoint reached}
				\State $lTurned = True$
			\EndIf
		\EndIf
	\Else
		\If{$\rho((v_{lLoc},v_{lLoc+1}),t) = 1$}
			\State $lLoc = lLoc + 1$
			\If{rLoc = n} \Comment{Right endpoint reached}
				\State $complete = True$
			\EndIf
		\EndIf
	\EndIf
	\If{not $rTurned$} \Comment{Advance right journey if possible}
		\If{$\rho((v_{rLoc},v_{rLoc+1}),t) = 1$}
			\State $rLoc = rLoc + 1$
			\If{rLoc = n} \Comment{Right endpoint reached}
				\State $rTurned = True$
			\EndIf
		\EndIf
	\Else
		\If{$\rho((v_{rLoc},v_{rLoc-1}),t) = 1$}
			\State $rLoc = rLoc - 1$
			\If{rLoc = 0} \Comment{Left endpoint reached}
				\State $complete = True$
			\EndIf
		\EndIf
	\EndIf
	\State $t = t + 1$ \Comment{Step}
\EndWhile
\State \Return $t$
\end{algorithmic}
\end{algorithm}
 
\begin{theorem}
\label{cycle}
DMVP for a single agent in $\mathcal{R}$ on a cycle is solvable in $O(Tn)$ time.
\end{theorem}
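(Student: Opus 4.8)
The plan is to reduce the cycle to a small collection of path-like instances, each solved by the same greedy simulation used for Theorem~\ref{line}, and then take the best. First I would apply the Turning Around Lemma (Lemma~\ref{turnAround}) to pin down the shape of an optimal journey $J$. Let $H$ be the edge-induced subgraph of $J$ in $G$; since $J$ covers every vertex and is a single walk, $H$ is a connected spanning subgraph of the cycle, hence either the whole cycle $C_n$ or $C_n$ with exactly one edge deleted (a Hamiltonian path). In the first case every vertex of $H$ has degree $2$, so by Lemma~\ref{turnAround} we may assume $J$ never turns around, i.e.\ $J$ is a monotone traversal in one of the two directions. In the second case the only degree-$1$ vertices of $H$ are the two endpoints of the path, so again by Lemma~\ref{turnAround} every reversal of $J$ occurs at an endpoint; since the agent must reach both endpoints, $J$ makes at most one reversal—it walks monotonically to one endpoint and then (unless it started there) monotonically to the other.

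This reduces the space of optimal solutions to $O(n)$ candidate walks: for each choice of turn-around vertex $v_i$ and each of the two initial directions, the candidate walk proceeds monotonically from $v_k$ to $v_i$, reverses, and continues monotonically until every vertex has been covered, with the two pure loops arising as the extreme cases having no reversal. Next, for each candidate walk I would run the greedy foremost simulation of Theorem~\ref{line}: track the agent's position and the current time, and for each prescribed edge advance time until $\rho$ marks it available, cross it, and record the completion time as soon as the final uncovered vertex is reached. Because $\mathcal{G} \in \mathcal{R}$, every edge recurs, so each prescribed edge eventually appears and the simulation halts; after the preprocessing of Section~\ref{modelAndProblem} the relevant lifetime is $O(T)$, so each simulation costs $O(T)$. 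Returning the minimum completion time over all $O(n)$ candidates then yields the optimum in $O(Tn)$ total time.

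The correctness hinges on two claims I would make precise. The structural claim is that the $O(n)$ enumerated walks contain the underlying walk of some optimal journey; this is exactly the case analysis above, and its delicate point is verifying that forcing every reversal to a degree-$1$ vertex of $H$ really does limit an optimal journey to a single turn—one must rule out multiple back-and-forth excursions, which Lemma~\ref{turnAround} handles by letting us replace any interior reversal with waiting. The optimization claim is that, for a fixed underlying walk, greedily taking each edge at its earliest available time minimizes the arrival (hence temporal) length; this follows from a standard exchange argument, since departing no later along a fixed edge sequence never delays a subsequent crossing. I expect the structural reduction—arguing cleanly that at most one reversal is needed and that the monotone pieces are precisely the path instances of Theorem~\ref{line}—to be the main obstacle, whereas the per-walk greedy bound and the $O(T)$ running time are routine given the preprocessing.
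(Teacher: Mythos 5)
Your proposal is correct and follows essentially the same route as the paper: the paper also enumerates $O(n)$ candidate solutions (parametrized by the last-visited vertex, which forces one cycle edge to never be traversed, reducing each candidate to a path instance—equivalently your turn-vertex/direction parametrization, cf.\ the paper's Figure~\ref{waysToCycle}) and solves each with the greedy $O(T)$ foremost simulation of Theorem~\ref{line}, taking the minimum for $O(Tn)$ total. Your derivation of the walk shapes directly from Lemma~\ref{turnAround} is just a slightly more explicit rendering of the same structural argument.
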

\begin{proof}
A similar case to Theorem \ref{line} can be made for the cycle $C = v_0v_1...v_nv_0$. Suppose $a$ starts at $v_0$ at time $0$. Consider an optimal visitation of $C$ for $a$. In this optimal solution, there is some vertex $v_k \neq v_0$ that is visited last. The second to last vertex is then either $v_{k-1}$ or $v_{k+1}$. If it is $v_{k-1}$, then $a$ must have already visited $v_{k+1}$ without visiting $v_k$. So, the edge $(v_{k+1},v_k)$ is never traversed. Therefore, the solution reduces to an optimal solution over the path graph $v_kv_{k-1}...v_{k+1}$ starting at $v_0$. Similarly, if instead $v_{k+1}$ is the vertex visited second to last, then $a$ must have already visited $v_{k-1}$ without visiting $v_k$, and the solution reduces to an optimal solution over the path $v_{k-1}v_{k-2}...v_{k+1}v_k$. Since there are $n-1$ possible final vertices for an optimal solution, the cost of an optimal solution can be computed by for each of these vertices computing the minimal cost between optimal coverage of each of the two corresponding paths using Algorithm~\ref{linealg}, and taking the minimum over all $n - 1$ vertices possible final vertices (see Figure \ref{waysToCycle}). This yields an $O(Tn)$ runtime. 
\end{proof}

\begin{figure}
\centering
\includegraphics[scale=2]{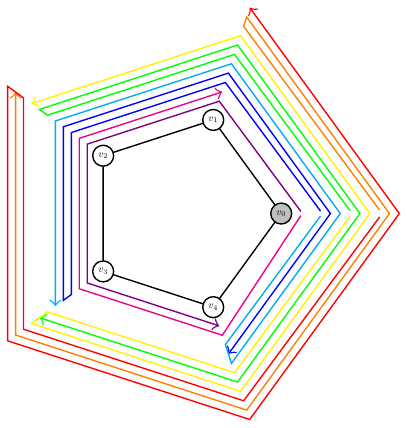}
\caption{The 8 possible underlying walks of solutions, satisfying Lemma 1, to the 5-cycle starting at $v_0$. \label{waysToCycle}}
\end{figure}

Now we show that despite the severe inapproximability of DMVP over $\mathcal{R}$, we can always compute an optimal solution in exponential time.

\begin{theorem}
\label{dynam}
DMVP for a single agent in $\mathcal{R}$ is solvable in $O(Tn^3+n^22^n)$ time.
\end{theorem}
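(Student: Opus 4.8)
The plan is to solve DMVP in $\mathcal{R}$ by a Held--Karp-style dynamic program over subsets of visited vertices, where the role of static edge weights is played by time-dependent foremost arrival times. First I would precompute, for every ordered pair of vertices $(u,w)$ and every start time $t \in \{0,\dots,T\}$, the quantity $\mathrm{arr}(u,w,t)$, the earliest arrival time at $w$ of a journey departing $u$ no earlier than $t$. To do this cheaply I would first build a table giving, for each edge and each time $t$, the next time $\ge t$ at which that edge is present; this costs $O(n^2 T)$ by scanning each edge's timeline backwards. Then, for each fixed source $u$ and start time $t$, a single-source foremost computation yields $\mathrm{arr}(u,\cdot,t)$ for all destinations: because the next-availability function is monotone in its time argument (the FIFO/monotonicity property), earliest-arrival times can be computed by a Dijkstra-like sweep that finalizes vertices in increasing order of arrival and relaxes each incident edge via an $O(1)$ table lookup, running in $O(n^2)$. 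Ranging over all $n$ sources and all $T{+}1$ start times gives the $O(Tn^3)$ term.

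Second, I would run the subset DP. Let $s$ be the start vertex and define $f[S][v]$ as the earliest time at which an agent leaving $s$ at time $0$ can have covered all of $S$ and be located at $v\in S$. The base case is $f[\{s\}][s]=0$, and the transition, processing subsets in order of increasing cardinality, is
\[
f[S\cup\{w\}][w] \;=\; \min_{v\in S}\ \mathrm{arr}\big(v,\,w,\,f[S][v]\big),\qquad w\notin S .
\]
The final answer is $\min_{v\in V} f[V][v]$. There are $n\,2^n$ states and $O(n)$ transitions each with an $O(1)$ lookup, giving the $O(n^2 2^n)$ term; summing yields $O(Tn^3+n^2 2^n)$, which is polynomial in $T$ and hence in $|\mathcal{G}|$ after the preprocessing of Observation~\ref{2n}.

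The hard part will be justifying that this reduction is exact, i.e.\ that it suffices to connect consecutive \emph{newly visited} vertices by foremost journeys in the full graph. For $\min_v f[V][v]\le \mathrm{OPT}$, I would take an optimal covering journey and let $v_0=s,v_1,\dots,v_{n-1}$ be the order in which vertices are first visited at times $\tau_0\le\dots\le\tau_{n-1}$; the portion of the optimal journey between the first visits to $v_i$ and $v_{i+1}$ is itself a valid journey, so $\mathrm{arr}(v_i,v_{i+1},\tau_i)\le\tau_{i+1}$, and since $\mathrm{arr}$ is monotone in its start time an easy induction gives $f[\{v_0,\dots,v_i\}][v_i]\le\tau_i$ for all $i$. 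For $\mathrm{OPT}\le \min_v f[V][v]$, I would observe that any sequence of transitions realizing $f[V][v]$ concatenates into a genuine journey (each leg departs no earlier than the previous leg arrives), which physically covers all chosen $v_i$, so its arrival time is an achievable coverage time. The one subtlety is the ``covers extra vertices en route'' phenomenon: a full-graph foremost leg from $v$ to $w$ may pass through a vertex not yet recorded in $S$, which only helps and which I would handle by noting that recording merely $S\cup\{w\}$ as covered after a transition is conservative and never invalidates either inequality. With both bounds meeting at $\mathrm{OPT}$, correctness follows, and the running time is as claimed.
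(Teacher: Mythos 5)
Your proposal is correct and follows essentially the same two-phase structure as the paper's proof: an $O(Tn^3)$ precomputation of foremost-journey values for every ordered vertex pair and every start time, followed by a Held--Karp subset dynamic program with time-dependent costs running in $O(n^2 2^n)$. The only differences are cosmetic --- the paper computes its table $d^t_{uv}$ by a backwards-in-time dynamic program rather than per-source Dijkstra-style sweeps with next-availability lookups, and your correctness argument (first-visit decomposition, monotonicity of arrival times, and the ``extra vertices covered en route'' point) spells out details the paper leaves implicit.
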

\begin{proof}
The proposed algorithm first computes all-pairs-all-times-foremost-journey of input TVG $\mathcal{G}$, using a straightforward dynamic programming algorithm, then uses this information to run another dynamic programming algorithm, conceived along the lines of a standard method for TSP \cite{Bellman}. 

Let $d^t_{uv}$ be the length of the foremost journey from $u$ to $v$, starting at time $t$. Algorithm \ref{apatfj1} computes $d^t_{uv}$ for all vertex pairs $(u,v)$, and times $t \in \mathcal{T}$ for a given TVG $\mathcal{G}$.

At all times $t$, for all vertices $u \in V$, $d^t_{uu}$ is clearly 0. At time $T$, the time limit has been reached, so an agent cannot move to another vertex in any guaranteed time, and thus we set $d^T_{uv} = \infty$ for all $u \neq v$. For all $T - 1 \geq t \geq 0$, in the worst case an agent can wait at $u$ for one step, and take the foremost journey to $v$ starting at time $t+1$. If there is a better journey than this, it must consist of not waiting, rather taking one of the edges available at time $t$ from $u$ to some vertex $k$. Subsequently taking the foremost journey from $k$ to $v$ starting at time $t+1$ results in an optimal journey through $k$. Algorithm \ref{apatfj1} clearly runs in $O(Tn^3)$ time, and uses $O(Tn^2)$ space.

\begin{algorithm}[h]
\caption{All-pairs-all-times-foremost-journey($\mathcal{G}$) \label{apatfj1}}
\begin{algorithmic}
\ForAll{$u,v \in V \times V$}	\Comment{Initialize base case for $t = T$.}
	\If{$u = v$}
		\State $d^T_{uv} = 0$ 
	\Else
		\State $d^T_{uv} = \infty$ \Comment{Since input ends at $T$, agent cannot move.}
	\EndIf
\EndFor	
\\
\For{$t = T-1,...,0$} \Comment{Work backwards until start time $t = 0$.}
	\ForAll{$u,v \in V \times V$}
		\If{$u = v$}
			\State $d^t_{uv} = 0$
		\Else
			\State $d^t_{uv} = d^{t+1}_{uv} + 1$ \Comment{In worst case, just wait at $u$.}
			\ForAll{$k \in V$}
				\If {$\rho((u,k),t) = 1$} \Comment{Check for better route.}
					\State $d^t_{uv} = \min(d^t_{uv},d^{t+1}_{kv} + 1)$ 
				\EndIf
			\EndFor
		\EndIf
	\EndFor
\EndFor

\end{algorithmic}
\end{algorithm}

Algorithm \ref{expo1} uses the $d^t_{uv}$ values computed by Algorithm \ref{apatfj1} to compute the cost of a minimal solution to DMVP for a single agent in $\mathcal{R}$. Let $V' \subseteq V$ and $c(V',v)$ be the minimal time it takes to visit all vertices in $V'$ starting at vertex $s$ at time $0$ and ending at vertex $v \in V'$.

After initializing the minimal costs for visiting subsets up to size 2, the algorithm repeatedly uses the minimal costs for size $i$ subsets to calculate $c(V',v)$ for each size $i+1$ subset $V'$ and $v \neq s \in V'$. Once computed up to size $n$, the algorithm returns the minimal cost among journeys that cover all vertices. This is an optimal solution to DMVP as it is the minimum cost of taking foremost journeys between vertices that results in a complete cover. There are $2^n$ subsets of $V$, and so $n2^n$ subset-vertex pairs of the form $(V',v)$. For each of these, the algorithm computes the minimum of $O(n)$ values. So, Algorithm \ref{expo1} has running time $O(n^22^n)$. Since it saves one cost for each subset-vertex pair, Algorithm \ref{expo1} also uses $O(n2^n)$ space. Sequentially running Algorithm \ref{apatfj1} followed by Algorithm \ref{expo1}, we have a complete algorithm for DMVP for a single agent in $\mathcal{R}$, with combined running time $O(Tn^3+n^22^n)$. 
 \end{proof}

\begin{algorithm}[h]
\caption{$\mbox{\emph{DMVP}}(\mathcal{G},\{s\})$ \label{expo1}}
\begin{algorithmic}
\State $c(\{s\},s) = 0$ \Comment{Initialize subset of size 1.}
\ForAll{$v \neq s \in V$} \Comment{Initialize subsets of size 2.}
	\State $c(\{s,v\},v) = d^0_{sv}$
\EndFor
\For{i = 3,...,n} \Comment{Build up to subsets of size n.}
	\ForAll{$S \subseteq V s.t. \lvert S \rvert = i$}
		\ForAll{$v \neq s \in V$}
			\State $c(V',v) = \min_{u \neq s \in V' \setminus \{v\}}(c(V' \setminus \{v\}, u) + d^{c(V' \setminus \{v\},u)}_{uv})$
		\EndFor
	\EndFor
\EndFor
\Return $\min_{v \neq s \in V}(c(V,v))$
\end{algorithmic}
\end{algorithm}

Almost-trees have been previously studied with respect to fixed parameter tractability (e.g., \cite{Fiala01}). We use Theorem \ref{dynam} to generalize Theorems \ref{line} and \ref{cycle} with the following:

\begin{theorem}
\label{fpt}
DMVP in $\mathcal{R}$ is fixed parameter tractable, when $G$ is an $m$-leaf $c$-almost-tree, for fixed parameter $m$, and $c$ constant.
\end{theorem}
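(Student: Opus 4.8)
The plan is to reduce the exponential $n^2 2^n$ term of Theorem~\ref{dynam} to a term of the form $f(m,c)\cdot\mathrm{poly}(n)$ by running the Held--Karp-style dynamic program of Algorithm~\ref{expo1} over a set of \emph{junction} vertices of size $O(m+c)$ rather than over all of $V$. First I would fix a spanning tree $\mathcal{T}_G$ of $G$ and let $F$ be the set of $c$ non-tree edges (so $|F| = c$). Define the junction set $W$ to consist of all vertices of degree at least $3$ in $G$ together with the endpoints of the edges in $F$. Since the homeomorphic reduction of a tree with $m$ leaves has at most $m-2$ branch vertices, and each of the $c$ extra edges contributes at most two new junctions, $|W| = O(m+c)$, which is $O(m)$ for constant $c$. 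Suppressing the degree-$2$ vertices, the edges of $G$ decompose into $O(m+c)$ maximal \emph{segments}, each a path whose interior vertices all have degree $2$ in $G$ and whose endpoints lie in $W \cup \{\text{leaves}\}$.

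Next I would invoke Lemma~\ref{turnAround} to constrain how an optimal journey interacts with each segment. Because an optimal solution never turns around at a degree-$2$ vertex of its own edge-induced subgraph, within any segment the agent moves monotonically between the two segment endpoints and may reverse only at a segment endpoint (a junction or a leaf). Consequently each segment is covered in one of only a constant number of ways, exactly as enumerated for a single path in Figure~\ref{waysToCover}: the agent either sweeps the segment once between its two endpoints, or (for a dead-end segment) enters and returns, covering all interior vertices along the way. This lets me treat each segment as an atomic unit of coverage and replace ``cover every vertex'' by ``cover every segment,'' with the covering of interior vertices handled entirely inside the segment.

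I would then adapt Algorithms~\ref{apatfj1} and~\ref{expo1}. For each ordered pair of junctions $(u,v)$ joined by a segment and each start time $t$, I would precompute the \emph{foremost covering journey} --- the soonest-arriving journey from $u$ to $v$ that also visits every interior vertex of that segment --- using the path subroutine behind Theorem~\ref{line} (and, for segments lying on cycles created by $F$, the analysis behind Theorem~\ref{cycle}); this costs $O(\mathrm{poly}(T,n))$ overall. I would also retain the ordinary foremost-journey table $d^t_{uv}$ from Algorithm~\ref{apatfj1}. The outer dynamic program then mirrors Algorithm~\ref{expo1}, but its state is a pair $(\sigma, v)$ where $\sigma$ records the subset of the $O(m+c)$ segments already fully covered and $v \in W$ is the current junction; transitions append either a covering journey across an uncovered segment or a plain foremost journey used purely to reposition. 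There are $2^{O(m+c)}$ subsets and $O(m+c)$ current-junction choices, so the whole program runs in $2^{O(m+c)}\cdot \mathrm{poly}(n)$ time, for a total of $O(Tn^3) + 2^{O(m+c)}\cdot\mathrm{poly}(n)$, which is FPT in $m$ for constant $c$.

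The main obstacle I anticipate is making the reduction to junctions faithful in the \emph{temporal} setting. Unlike static coverage, the time at which the agent covers a segment depends on when it enters, so the covering-journey tables must be indexed by start time and reconciled with the arrival times produced by the outer DP, just as $d^{c(V'\setminus\{v\},u)}_{uv}$ is used in Algorithm~\ref{expo1}. I must also verify that an optimal journey genuinely decomposes into segment-level monotone pieces --- that reversing only at junctions and leaves never forces a strictly worse solution --- and handle the bookkeeping for the $c$ independent cycles, where a segment may legitimately be entered from either side or traversed twice. Establishing that a constant number of per-segment coverage patterns and a constant bound on traversals per segment suffice, and that these compose into the subset-indexed DP without undercounting coverage of interior vertices, is the crux of the argument.
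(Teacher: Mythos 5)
Your skeleton---junction set $W$, degree-2 segments, and a Held--Karp recursion in the style of Algorithm~\ref{expo1} over $O(m+c)$ structural elements with time-indexed foremost journeys---is essentially the paper's approach, and it is sound for the tree-like part of the argument. But there is a genuine gap, and it sits exactly at the point you defer as ``the crux.'' Your claim that ``within any segment the agent moves monotonically between the two segment endpoints'' over-reads Lemma~\ref{turnAround}: that lemma forbids turning around only at vertices of degree $2$ \emph{in the edge-induced subgraph of the journey}, not of degree $2$ in $G$. Consider a segment $u\,x_1x_2\cdots x_k\,w$ lying on one of the $c$ cycles. An optimal journey may never traverse some interior edge $(x_i,x_{i+1})$ at all; it can enter from $u$, cover up to $x_i$, turn around there, and much later enter from $w$ and cover down to $x_{i+1}$. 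This is fully consistent with Lemma~\ref{turnAround}, because $x_i$ and $x_{i+1}$ then have degree $1$ in the journey's induced subgraph. In $\mathcal{R}$, where the skipped edge may reappear arbitrarily late, such two-sided partial coverage can be strictly better than any end-to-end sweep. Your DP state (subset of \emph{fully covered} segments, current junction), with transitions limited to end-to-end covering sweeps and pure repositioning journeys, cannot express this, so the algorithm as described can return a suboptimal value. In particular, ``a constant number of per-segment coverage patterns'' is false: a cycle segment has $\Theta(n)$ patterns, one per choice of skipped edge.

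The paper closes this gap by enumeration rather than by a structural lemma: it guesses the edge-induced subgraph $G'$ of the optimal journey, i.e., for each $0 \le c' \le c$ it chooses $c'$ of the $O(m)$ cycle paths and one edge on each to delete, giving $O(m^c n^c)$ candidate subgraphs. Each deleted edge splits its segment into two dead-end arms whose far endpoints become leaves of $G'$, so $G'$ has at most $m+2c$ leaves; and \emph{within} $G'$, Lemma~\ref{turnAround} now does apply cleanly, so every remaining degree-2 path is swept end to end in one of exactly two directions. The leaf-and-directed-path ordering DP of Theorem~\ref{dynam} is then run for each guess and each choice of directions, for a total of $O(m^c n^c) \cdot O(2^m) \cdot O(Tn^3 + m^2 2^m) = O(Tn^{3+c}f(m))$, which is FPT for constant $c$. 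Your write-up becomes correct---and essentially identical to the paper's---once you add this outer enumeration over skipped edges (still polynomial in $n$ for constant $c$) and run your segment DP on each resulting tree-like instance; without it, the reduction from vertices to segments is not faithful.
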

\begin{proof}
First, consider the restricted case where $G$ is an $m$-leaf tree. Since every leaf must be visited, and visiting all leaves implies coverage of the entire tree, there is a minimal solution that can be thought of as an ordering of the set of leaves of $G$, and the foremost journeys between them. In this case, there is only one \emph{way} to visit any node, namely, on the way to a leaf. Using this observation and Algorithm \ref{expo1} from the proof of Theorem \ref{dynam}, we see that we only need to consider all orderings of leaves, instead of all orderings of vertices, yielding a run time of $O(Tn^3 + m^22^m)$, which is indeed fixed parameter tractable for parameter $m$.

Suppose the underlying graph $G$ of $\mathcal{G}$ is an $m$-leaf $c$-almost-tree. Consider all edges $e$ such that removing $e$ from $G$ results in a $(c-1)$-almost-tree. Each of these edges lies on some path $P$ such that removing any edge of $P$ will similarly result in a $(c-1)$-almost-tree, and every intermediate vertex on the path has degree 2. Suppose $P$ is the path $v_0...v_l$. Since $G$ is an $m$-leaf $c$-almost-tree, there are $O(m)$ paths of this type. The edge-induced subgraph $G'$ of the underlying walk of an optimal covering of $\mathcal{G}$ can be any $(c-c')$-almost-tree $\subseteq G$, for $0 \leq c' \leq c$. For each $c'$, a solution involves selecting $c'$ paths, each of $O(n)$ length, from which to remove an edge. So, there are $O(m^{c'}n^{c'})$ possible choices of $(c-c')$-almost-trees, and thus $O(\sum_{c'=0}^c (m^{c'}n^{c'})) = O(m^cn^c)$ choices for $G'$. Every $G'$ has no more than $m+2c$ leaves. Since every edge of $G'$ is covered, by Lemma~\ref{turnAround}, there are at most 2 ways to cover each of the remaining $O(m)$ paths  $v_0...v_l$ of intermediate vertex degree 2, namely: entering at $v_0$ and exiting at $v_l$, or entering at $v_l$ and exiting at $v_0$. Augment the set of leaves to be ordered in a solution with the selected ways of covering these paths, that is, select one of the consecutive subsequences $v_0v_l$ or $v_lv_0$ to be in the ordering. With this augmentation, we still have a set of $O(m)$ elements to be ordered, the optimal ordering of which can be computed via Theorem \ref{dynam} in $O(Tn^3 +  m^22^m)$ time. The minimum over all ways of covering $G'$ can then be computed in $O(2^m)O(Tn^3 +  m^22^m) = O(Tn^32^m + m^22^{2m})$. The overall minimum cost for covering $\mathcal{G}$ can then be computed by taking the minimum cost over all $O(m^c n^c)$ edge-induced subgraphs in $O(m^{c'}n^c)O(Tn^32^m + m^22^{2m}) = O(Tn^{3+c}f(m))$ time.
 \end{proof}

The following result follows immediately for the case when $m = O(\lg n)$. 

\begin{corollary}
\label{poly}
DMVP in $\mathcal{R}$ is solvable in polynomial time, if $\mathcal{G}$ is an $O(\lg n)$-leaf $c$-almost-tree, for $c$ constant.
\end{corollary}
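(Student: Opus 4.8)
The plan is to derive this directly from the fixed-parameter tractability of Theorem \ref{fpt}, specializing the parameter to $m = O(\lg n)$. Recall that the proof of Theorem \ref{fpt} yields a running time of the form $O(Tn^{3+c}f(m))$, where the factor $f(m)$ collects exactly the terms that are exponential in the number of leaves $m$, namely contributions of order $2^m$ and $2^{2m}$ (together with polynomial-in-$m$ coefficients such as $m^c$ and $m^{c+2}$ arising from the enumeration of almost-tree subgraphs and from the dynamic program of Algorithm \ref{expo1}). The remaining factors $T$ and $n^{3+c}$ are already polynomial in the input size: $T$ is polynomial after the preprocessing justified by Observation \ref{2n}, and $n^{3+c}$ is polynomial because $c$ is a fixed constant.

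The key observation is that the only super-polynomial dependence sits inside $f(m)$, and this dependence is tamed once $m = O(\lg n)$. First I would fix the hypothesis $m \le a\lg n$ for some constant $a$. Then the dominant exponential terms evaluate to $2^{m} \le n^{a}$ and $2^{2m} \le n^{2a}$, both polynomial in $n$, while the polynomial-in-$m$ coefficients become $\mathrm{poly}(\lg n)$, hence also polynomial (indeed sub-polynomial) in $n$. Substituting these bounds into $f(m)$ shows $f(m) = n^{O(1)}$, so the entire running time $O(Tn^{3+c}f(m))$ collapses to a polynomial in $n$ (and in the already-polynomial $T$).

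Since the argument is essentially a substitution, I expect no genuine combinatorial obstacle; the only point requiring care is bookkeeping of which factors stay polynomial. In particular, it is worth making explicit that $c$ must remain a fixed constant: the factor $n^{3+c}$ and the $O(m^c n^c)$ count of candidate edge-induced subgraphs stay polynomial precisely because $c$ does not grow with $n$, whereas if $c$ were allowed to scale with $n$ these terms would blow up. With $c$ constant and $m = O(\lg n)$, every factor in the running time of Theorem \ref{fpt} is polynomial, which establishes the corollary.
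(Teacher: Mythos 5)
Your proposal is correct and takes exactly the route the paper intends: the paper gives no separate proof, stating only that the corollary ``follows immediately'' from Theorem~\ref{fpt} when $m = O(\lg n)$, and your substitution argument (turning the $2^m$ and $2^{2m}$ terms in $f(m)$ into $n^{O(1)}$ while noting $T$ is polynomial after the preprocessing of Observation~\ref{2n} and $n^{3+c}$ is polynomial for constant $c$) is precisely the bookkeeping that justifies that claim.
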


We conjecture (see Section \ref{openProbs}) that the maximal class of graphs over which DMVP in $\mathcal{R}$ is poly-time solvable is the class of all graphs with polynomially many spanning trees, all of which have $O(\lg n)$ leaves. 

Since DMVP in $\mathcal{B}$ is bounded by $2\Delta n$, the running time of the algorithm in Theorem \ref{dynam} on TVGs over $\mathcal{B}$ reduces to $O(\Delta n^4+n^22^n)$. We also see that we are able to greatly improve on approximation from $\mathcal{R}$ to $\mathcal{B}$:

\begin{theorem}
\label{btree}
DMVP in $\mathcal{B}$ over a tree can be $\Delta$-approximated in $O(n)$ time. This approximation is tight.
\end{theorem}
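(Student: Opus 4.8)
The plan is to reduce the problem to the \emph{static} minimum covering walk on the tree and to bound the price of dynamics by the recurrence window $\Delta$. First I would invoke the linear-time algorithm for MVP on trees from \cite{MVP} to compute, in $O(n)$ time, a shortest walk $W$ starting at $s$ that visits every vertex of the tree; let $\ell^*$ denote its topological length, i.e.\ its number of edge traversals. On a tree every edge lies on the unique path to some leaf and so must be crossed, which means $\ell^*$ is exactly the minimum number of edge-steps \emph{any} covering journey can use, regardless of the dynamics.

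The lower bound is then immediate: any feasible $\mathcal{B}$-solution has an underlying walk that covers the tree, hence of topological length at least $\ell^*$; since each traversal costs at least one time step and waiting only increases temporal length, $\mathrm{OPT}_{\mathcal{B}} \ge \ell^*$. For the upper bound I would have the agent simply follow $W$ in $\mathcal{G}$, waiting at each vertex until the next required edge appears. By the defining property of $\mathcal{B}$, every edge reappears within any window of $\Delta$ consecutive steps, so each of the $\ell^*$ traversals costs at most $\Delta$ time steps (at most $\Delta-1$ waiting, then $1$ to cross). Thus the journey produced has temporal length at most $\Delta\ell^* \le \Delta\,\mathrm{OPT}_{\mathcal{B}}$, giving a $\Delta$-approximation, and the walk $W$ is computed in $O(n)$ time, which is all the theorem asks.

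Tightness follows directly from the earlier lower bounds: Theorems \ref{bapx} and \ref{bcomb} show that DMVP in $\mathcal{B}$ is NP-hard to approximate within any factor less than $\Delta$ even when $G$ is a spider or a comb, both of which are trees, so no polynomial-time algorithm can achieve a ratio below $\Delta$ unless $\mathrm{P}=\mathrm{NP}$. The main obstacle is the lower-bound argument: I must verify that the optimal $\mathcal{B}$-journey, whose underlying walk need not coincide with $W$, still cannot have topological length below $\ell^*$. This is precisely where the tree structure is essential — each vertex is reachable only through its unique incident path, forcing every covering walk to use at least $\ell^*$ edges — and the same argument would fail on graphs with cycles, which is consistent with the weaker $2\Delta$ bound claimed for the general case.
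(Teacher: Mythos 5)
Your proof is correct and follows essentially the same approach as the paper's: compute the static optimal covering walk in $O(n)$ time via \cite{MVP}, follow it in the TVG waiting at most $\Delta-1$ steps before each traversal, lower-bound the dynamic optimum by the static optimum, and obtain tightness from Theorems \ref{bapx} and \ref{bcomb}. One small quibble: your closing remark misplaces the role of the tree structure --- the lower bound $\mathrm{OPT}_{\mathcal{B}} \ge \ell^*$ holds on \emph{any} graph (the underlying walk of any covering journey is a static covering walk, and temporal length is at least topological length); what actually fails on general graphs is the ability to compute the static optimum in polynomial time, which is why the paper falls back to a spanning-tree traversal and only a $2\Delta$ ratio there.
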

\begin{proof}
In \cite{MVP}, it is shown that minimal MVP cost $C$ can be computed in $O(n)$ for static graphs. In the dynamic case, the journey corresponding to following exactly the edges in the static solution when they are available can be followed, waiting at most $\Delta-1$ steps for each edge to appear before it is traversed. Since no solution can be better than $C$, and the proposed journey takes at most $\Delta C$, it must be a $\Delta$-approximation. From Theorems \ref{bapx} and \ref{bcomb}, we know there can be no better approximation. Hence, this approximation is tight.
\end{proof}

\begin{theorem}
\label{bapprox}
DMVP in $\mathcal{B}$ can be $2\Delta$-approximated by any online spanning tree traversal of $G$.
\end{theorem}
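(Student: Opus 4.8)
The plan is to sandwich the algorithm's cost and the optimal cost between simple bounds and then compare them. For the lower bound, I would observe that any feasible single-agent journey covering all $n$ vertices must make at least $n-1$ moves: the agent begins at a single vertex, and each traversed edge reveals at most one new vertex. Since every edge has travel cost $1$ and departure is at time $0$, the temporal length of the optimal solution satisfies $\mathrm{OPT} \geq n-1$.

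For the upper bound, I would use the fact that a spanning tree of the connected graph $G$ has exactly $n-1$ edges, and that any spanning tree traversal crosses each of these $n-1$ edges at most twice (once descending, once backtracking), for a total of at most $2(n-1)$ edge crossings. The key step is bounding the temporal cost of a single crossing: in $\mathcal{B}$, every edge $e$ is available at least once in any window $[t, t+\Delta)$, so whenever the agent wishes to cross $e$ it waits at most $\Delta-1$ steps for $e$ to appear and then spends $1$ step crossing, for a total of at most $\Delta$ time steps. Multiplying, the online traversal finishes within at most $2(n-1)\Delta$ steps.

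Combining the two bounds yields $\mathrm{ALG} \leq 2(n-1)\Delta \leq 2\Delta \cdot \mathrm{OPT}$, which is exactly the claimed $2\Delta$-approximation. I would also emphasize that the argument never requires foreknowledge of the edge schedule: connectivity of $G$ together with the $\Delta$-bounded recurrence guarantees that the next desired tree edge always reappears within $\Delta$ steps, so the traversal can be carried out purely online and never stalls.

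The main obstacle, though a mild one, is making the per-edge cost bound fully rigorous in the online setting: I need the guarantee that the agent, upon arriving at a vertex at an arbitrary time $t$ and selecting a tree edge to cross, can always wait and cross within $\Delta$ steps, which follows directly from the definition of $\mathcal{B}$ since the window $[t, t+\Delta)$ contains an availability of every edge for every $t$. A secondary point to verify is that the ``at most twice per edge'' count still holds when the traversal need not return to its start vertex, which can only decrease the number of crossings and hence only strengthens the bound.
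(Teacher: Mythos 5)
Your proposal is correct and follows essentially the same argument as the paper: bound the spanning tree traversal's topological length (the paper uses the slightly tighter $2n-3$ since no return to the start is needed, versus your $2(n-1)$), charge at most $\Delta$ time per edge crossing via the bounded-recurrence guarantee, and compare against the trivial lower bound of $n-1$ on any covering journey. Both versions yield the claimed $2\Delta$-approximation, so there is nothing to fix.
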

\begin{proof}
The topological length of a spanning tree traversal is no more than $2n-3$. In the worst case, waiting $\Delta-1$ time steps for each subsequent edge to appear results in coverage of $\mathcal{G}$ in $2n\Delta - 3\Delta$ steps. The fastest possible coverage of $\mathcal{G}$ is via the traversal of a hamiltonian path in $G$ without waiting, which takes $n-1$ steps, and $2\Delta(n-1) > 2n\Delta - 3\Delta$. 
\end{proof}

Theorems \ref{bapx} and \ref{bcomb} show the tightness of Theorem \ref{btree}. Here, $\mathcal{B}$ is starkly differentiated from $\mathcal{R}$ in that we have at least some ability to approximate in $\mathcal{B}$. See Section \ref{openProbs} for a further discussion of the relationship between these two classes.

Similar to the case for $\mathcal{B}$, DMVP in $\mathcal{P}$ is bounded by $2pn$, so the running time of the algorithm in Theorem \ref{dynam} reduces to $O(pn^4+n^22^n)$. To exemplify the differences between $\mathcal{P}$ and $\mathcal{B}$, and motivate interest in the tractability of DMVP over $\mathcal{P}$, we first give the following simple example:

\begin{theorem}
\label{pcomb}
For any $p$, there is a class of problems over combs, for which DMVP in $\mathcal{B}$ is NP-hard, but in $\mathcal{P}$ is solvable by the online algorithm: take arms when you get to them.
\end{theorem}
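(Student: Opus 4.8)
The plan is to exhibit a single family of comb topologies on which the entire source of difficulty is the dynamics class, not the underlying graph. For the topology I would reuse (or lightly adapt) the combs built in the proof of Theorem~\ref{bcomb}: a backbone carrying the integer-length arms of a $3$-partition instance, the checkpoint teeth, and the long terminal arm, but with the agent's start relocated to a backbone endpoint so that a single monotone sweep is natural. The $\mathcal{B}$ half of the statement is then essentially inherited: Theorem~\ref{bcomb} already encodes $3$-partition over exactly such combs, and the only point worth stressing is that its hard schedule is \emph{non-periodic} — it alternates designated \emph{take} windows (in which the $s_i$-arms are steady) with \emph{check} windows, and it is precisely this changing phase structure, which no fixed period $p$ can reproduce, that forces the packing decision and yields NP-hardness.

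For the $\mathcal{P}$ half I would equip the same comb with a periodic schedule of period $p$ under which the greedy left-to-right sweep never has to wait: keep every backbone edge permanently available, and make each arm \emph{carrying} in the periodic sense, so that the arm edge at distance $i$ from the backbone is available exactly at the residues modulo $p$ at which an agent performing the monotone sweep and taking that arm immediately would arrive. Because the terminal arm lies beyond all other arms and must still be reached, the agent is forced to traverse the entire backbone and to return to the backbone after each non-terminal arm; hence any cover has underlying walk at least as long (topologically) as the greedy sweep. The online rule ``take arms when you get to them'' realizes exactly this walk with zero waiting, so it attains the topological lower bound and is therefore optimal.

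To make the optimality argument rigorous I would first invoke the turning-around lemma (Lemma~\ref{turnAround}) to restrict attention to solutions whose restriction to each arm is a single out-and-back, eliminating wasteful backtracking at the degree-$2$ interior arm vertices, and whose backbone portion is monotone. I would then establish a fixed topological lower bound: every feasible journey must cross each non-terminal arm edge at least twice and each backbone edge at least once, and must reach the tip of the terminal arm; the greedy sweep meets this count exactly while never pausing, and no reordering of the arms can reduce the number of required edge traversals. Combining the lower bound with the zero-waiting property of the carrying schedule gives optimality of the online rule.

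The main obstacle I anticipate is the $\mathcal{P}$ optimality argument rather than the $\mathcal{B}$ hardness. The delicate step is checking that the periodic carrying schedule can be defined consistently across all residues so that the greedy agent — whose arrival time at a given arm depends on exactly how much backbone and how many earlier arms it has already covered — always finds the next needed edge present, i.e. that the timing of the whole sweep is compatible with one global period $p$. Once this synchronization is verified, the topological lower bound together with zero waiting closes the proof; everything else reduces to the bookkeeping of edge-traversal counts on a comb.
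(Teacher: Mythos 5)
There is a genuine gap, and it is in the $\mathcal{P}$ half: you have quantified the dynamics the wrong way around. The content of Theorem~\ref{pcomb} is a separation between the \emph{classes} $\mathcal{B}$ and $\mathcal{P}$ over a fixed class of underlying combs: the dynamics are adversarial in both cases, i.e., the online rule ``take arms when you get to them'' must be optimal for \emph{every} period-$p$ TVG over these combs, while some TVG in $\mathcal{B}$ over the same combs encodes 3-partition. Your proposal instead \emph{constructs} one favorable periodic schedule (steady backbone, arms carrying in phase with the greedy sweep) under which the sweep never waits, and then proves optimality by matching a topological lower bound with zero waiting. That statement is trivially true and demonstrates no separation: one can just as easily hand-pick a single TVG in $\mathcal{B}$ (indeed an all-edges-always-present one) for which the greedy sweep is optimal. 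The synchronization difficulty you flag at the end is a difficulty with your own constructed schedule, not with the actual claim.

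The paper's proof has to work without choosing the dynamics, and it does so using the one property that genuinely distinguishes $\mathcal{P}$: for any period-$p$ dynamics, a foremost journey costs at most the fastest journey plus $p-1$, whereas in $\mathcal{B}$ the overhead factor can be $\Delta$. The comb class is therefore \emph{not} simply the combs of Theorem~\ref{bcomb}: consecutive arms must be separated along the backbone by distance $k \geq 2p-2$. Then an induction on the number of arms shows the online algorithm loses at most $2(p-1)$ per arm segment (waiting once for the arm, once for the next backbone leg), while any solution that skips the current arm must later retraverse that backbone segment at cost at least $k \geq 2p-2$, so greedy is optimal under \emph{arbitrary} periodic dynamics; under general $\mathcal{P}$ dynamics the greedy agent does wait, so your ``zero waiting plus topological lower bound'' argument cannot close the proof. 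The $\mathcal{B}$-hardness half then also needs work you omit: Theorem~\ref{bcomb} must be rerun on this rescaled class (elongated arms, stretched backbone, an extra check tooth at $b_0$), since the hardness construction has to live on the same combs for which the $\mathcal{P}$ claim holds.
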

\begin{proof}
Suppose an agent $a$ starts at $b_0$. $a$ can either take $A_0$ immediately, or travel along $B$ to visit other arms and return to visit $A_0$ at a later time. Suppose the fastest an agent starting at $b_0$ can visit the leaf of $A_0$ and return to $b_0$ is $l$ steps. Then the longest this could possibly take $a$ starting at time 0 is $l + (p-1)$ steps, this worst case occurring when $a$ must wait $p-1$ steps for the fastest journey to become available. Suppose the fastest journey from $b_0$ to $b_k$ takes $k'$ steps. Then in the worst case, traveling from $b_0$ to $b_k$ takes $k' + (p-1)$ steps. Suppose the fastest coverage of the remaining induced subgraph $G' = G \setminus (A_0 \cup \{b_0,...,b_{k-1}\})$ takes $m$ steps. 

If $G'$ has only one arm, the foremost journey from $b_k$ to the leaf of this arm is clearly optimal. Assume that if $G'$ has $\alpha$ arms, then the following online algorithm results in an optimal solution: if at the endpoint of an unvisited arm, take that arm, otherwise visit the next unvisited vertex of $B$. In the $\alpha+1$ arm case, our agent starting at $b_0$ using this algorithm will complete coverage in no more than $l + (p-1) + k' + (p-1) + m = l + k + (2p-2) + m$ steps. But any solution in which $a$ does not take $A_0$ first must cost at least $l + k' + k + m$, as $a$ must retraverse $b_k...b_0$ on its way back to cover $A_0$. Since $k \geq 2p-2$, the online solution must be minimal.

It is straightforward to modify Theorem \ref{bcomb} (i.e., by appropriately scaling up the underlying graph: elongating arms, extending the backbone, separating arms along the backbone, and adding an additional check tooth at $b_0$) to show that DMVP over the above class of combs is NP-hard in $\mathcal{B}$, for a single agent starting at $b_0$.
\end{proof}

\begin{figure}
\centering
\includegraphics[scale=1]{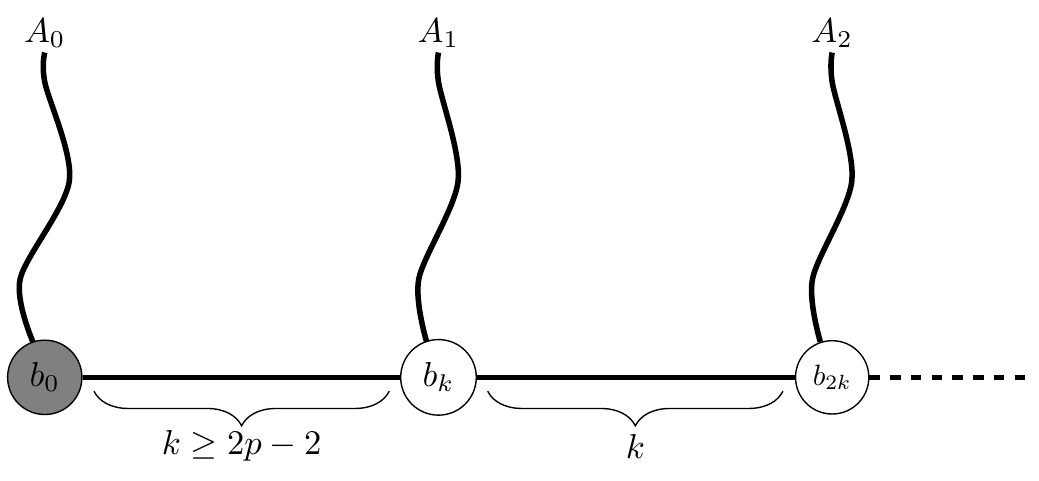}
\caption{Segment of an underlying comb for which DMVP  in $\mathcal{P}$ for an agent starting at $b_0$ is solved by the simple online tree traversal algorithm, regardless of the lengths of each arm $A_i$ \label{imgLongComb}}
\end{figure}

The quality of $\mathcal{P}$ we take advantage of above is that if the fastest journey between two nodes takes $d$ steps, the foremost journey can take no longer than $d + (p-1)$, while in $\mathcal{B}$ it can be as bad as $d\Delta$. We again harness this effect in the following result, a stronger theorem in the context of our inapproximability results for $\mathcal{R}$ and $\mathcal{B}$ (Theorems \ref{npostar} and \ref{bapx}):

\begin{theorem}
\label{pspider}
DMVP in $\mathcal{P}$ over a spider is solvable in polynomial time, for fixed $p$.
\end{theorem}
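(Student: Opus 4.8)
The plan is to exploit the Turning-around lemma (Lemma~\ref{turnAround}) to reduce the problem to \emph{sequencing the arms}, and then to exploit periodicity to keep the relevant state small: only a residue modulo $p$. Let $G$ be a spider with center $c$ and arms $A_1,\dots,A_q$. Every interior arm vertex has degree $2$ in the edge-induced subgraph of any solution that covers it, so by Lemma~\ref{turnAround} there is an optimal journey that never turns around at such a vertex. Hence on each arm the agent enters at $c$, proceeds monotonically to the tip (the leaf it must cover), and returns monotonically to $c$ — with the two possible exceptions of the arm containing the start $s$ and the arm on which the journey terminates (traversed only outward, ending at its tip). An optimal solution is therefore determined by: (i) how the start arm is handled and the residue at which the agent first reaches $c$; (ii) a linear order in which the remaining arms are each covered by an out-and-back excursion from $c$; and (iii) which arm is covered last. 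I would guess (i) (a polynomial number of options, covering the start arm's tip and reaching $c$) and guess the last arm (iii) among the $O(n)$ arms, leaving the ordering (ii) to be optimized.

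The key to a polynomial algorithm is that the cost and phase-evolution of an excursion depend on the current time only through its residue $\bmod\,p$. For each arm $A_i$ and each pair $(\phi,\psi)\in\{0,\dots,p-1\}^2$ I would precompute $r_i(\phi\to\psi)$, the minimum duration of an out-and-back excursion on $A_i$ departing $c$ at a time $\equiv\phi$ and returning at a time $\equiv\psi$, together with the one-way costs $f_i(\phi)$ to the tip. Since a spider excursion uses $O(n)$ edges and total time is bounded by $2pn$, each of these $O(p^2)$ entries is computable in polynomial time, so all tables are built in polynomial time for fixed $p$. Recording all reachable $\psi$ (not only the foremost one) is deliberate: it captures the possibility that a \emph{slowed} excursion yields a more favorable phase for later arms, so that optimizing each excursion locally for speed is not assumed.

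Given these tables, choosing the order (ii) becomes equivalent to selecting, for each non-last arm $A_i$, one transition edge $\phi\to\psi$ of weight $r_i(\phi\to\psi)$ in a directed multigraph $H$ on the $p$ residue-vertices, such that $H$ admits an \emph{Eulerian trail} from the start residue (fixed by (i)) to the residue $\phi^*$ at which the last arm departs, minimizing total weight plus $f_{\mathrm{last}}(\phi^*)$. The head-to-tail continuity of an Eulerian trail is exactly the phase-consistency constraint of a valid ordering, so the equivalence is faithful. The main obstacle is solving this minimum-weight Eulerian-trail-with-choices problem in polynomial time. The degree-balance conditions (out-degree $=$ in-degree at every residue except the designated start/end) are linear and enforceable by a min-cost flow in which each arm is one unit of supply choosing among its $\le p^2$ candidate edges; the difficulty is that the connectivity condition for an Eulerian trail is non-local. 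For fixed $p$ I would dispatch this by enumerating the $O(1)$ partitions of the $p$ residues into support components and the $O(p^2)$ start/end designations, solving one min-cost flow per configuration, which keeps the whole procedure polynomial in $n$. The crux of the write-up will be verifying this reduction carefully: that beneficial waiting and the start-arm and last-arm special cases are fully absorbed into the tables and the guessing, and that the optimum of the flow relaxation is always realizable as a single genuine trail (hence a genuine journey) covering every arm.
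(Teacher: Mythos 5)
Your first stage is sound and matches the paper's: Lemma~\ref{turnAround} reduces the problem to an ordering of out-and-back excursions from $c$, whose costs depend on the current time only through its residue $\bmod\,p$ (the paper encodes the same information as ``extra time'' and ``return residue'' functions rather than as transition tables), with the start arm and the final one-way arm handled by guessing. The genuine gap is in the second stage, precisely at the point you flag as the crux: the minimum-weight ``one edge per arm such that an Eulerian trail exists'' problem is \emph{not} solved by a degree-balance flow plus enumeration of supports. A balanced flow optimum need not have connected support: with $p=4$, if half the arms have cheap transitions only between residues $0$ and $1$ and the other half only between residues $2$ and $3$ (every arm can reach every exit residue by waiting, but at up to $p-1$ extra cost), then on the guessed support $S=\{0,1,2,3\}$ the flow returns two disjoint families of cheap cycles, which admits no Eulerian trail; rejecting that solution does not recover the true optimum for this guess, which must pay for at least one expensive cross-transition. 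So the flow relaxation is neither realizable nor tight, and enumerating ``partitions of the residues into support components'' cannot repair this, because connectivity is a property of which \emph{edges} are chosen, not of which residues are touched. (A secondary soft spot: ``each arm is one unit of supply choosing among its candidate edges'' under global balance constraints is not literally a min-cost flow, since an edge's cost couples its two endpoints, and funneling an arm's candidate arcs through a single per-arm node decouples entry residue from exit residue.)

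The gap is closable inside your framework, and the repaired algorithm is arguably cleaner than the paper's. Since there are only $p^2$ transition types, enumerate the count vector $y=(y_{\phi\psi})$ recording how many arms use each type; there are $O(n^{p^2})$ such vectors, polynomially many for fixed $p$. Both the degree-balance and the connectivity conditions for an Eulerian trail depend only on $y$, so feasibility can be checked directly on the $p$-vertex multigraph determined by $y$; and for feasible $y$, the cheapest assignment of actual arms to types is a bipartite transportation problem, which genuinely is a min-cost flow. Minimizing over all feasible $y$ and over your guesses for the start and last arms gives a correct polynomial-time algorithm. The paper's own second stage is different: it groups arms into $O(p^3)$ equivalence classes, defines \emph{patterns} (short class sequences returning to the same residue), proves by an exchange/reordering argument that some optimal solution has a canonical form (patterns grouped together with fewer than $p$ stray arms between groups), and enumerates the polynomially many canonical forms. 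So after the shared reduction the two routes diverge; yours, once the connectivity step is repaired as above, is a legitimate and quantitatively tighter alternative, but as written the Eulerian-trail optimization step fails.
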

\begin{proof}
Starting at the center $c$ of the spider, it is never useful for an agent to travel along any arm, unless it reaches a leaf. That is, an optimal solution is essentially an optimal visitation order of the leaves. We can set up a cost function $c(l,t)$ giving the minimal time it takes to travel from $c$ to leaf $l$ and back, starting at time $t$. Since $\mathcal{G}$ is periodic, $c(l,t) = c(l,t+kp) \ \forall k \in \mathbb{Z}$. Suppose the fastest journey from $c$ to $l$ and back has cost $m(l)$. Let \emph{extra time} $e(l,t) = c(l,t) - m(l)$, be the cost above minimum incurred by traveling to $l$ and back starting at time $t$. $0 \leq e(l,t) < p \ \forall l, t$, since $a$ can always simply wait at most $p-1$ steps for the periodically fastest journey to be available. For any $l$, there are only $p^2$ possible functions $e(l,t)$, since for all $0 \leq t\pmod{p} \leq p-1$, $0 \leq e(l,t) < p$. Let $r(l,t)$ be the \emph{return time} $\mod p$ of traveling to $l$ and back, that is, $c(l,t) = i \implies r(l,t) = t + i \pmod{p}$. Classify each $l$ by $e$ and $r$. Let $l_1 \equiv l_2 \iff e(l_1,t) = e(l_2,t)$ and $r(l_1,t) = r(l_2,t) \ \forall t$. Since for each $t$ there are $p$ choices for $e$ and $p$ choices for $r$, there are $p^3$ such equivalence classes. During a traversal of the spider, leaves with a common equivalence class are interchangeable, since at a given time, taking any of the same class will result in equivalent incurrence of cost above minimum as well as equivalent return time. Thus, a minimal traversal consists of traversing an ordering of arms corresponding to a sequence of equivalence classes $q_i$, such that the frequency of $q_i$ in the sequence is the number of arms classified as $q_i$. 

Notice that for any length $p$ traversed sequence $q_1q_2...q_p$, by the pigeonhole principle, there must be $q_i, q_j$ with $i < j$ such that $r(q_i,t_i) = r(q_j,t_j)$, where $t_i$ is the start time for traversing the $q_i$ arm, and $t_j$ is the start time for traversing the $q_j$ arm. Let $Q^t$ be a \emph{pattern} if $Q$ is a sequence of equivalence classes with $0 < \lvert Q \rvert \leq p$, and starting at $c$ at time $t$, the traversal of $Q$ returns at some time $t' \equiv t \pmod{p}$. Furthermore, $Q^t$ is not a pattern if it contains any subpatterns, i.e., traversed subsequence with equivalent start and end time. Any length $p$ sequence must contain a pattern. An optimal solution can be decomposed into a sequence alternating between patterns and non-pattern subsequences between patterns. Any pattern can be removed from its location starting at time $t$ and inserted at any different location $t' \equiv t \pmod{p}$, since the fact that the pattern has the equivalent start and end time means adjacent journeys will be unaltered, due to the periodicity of $\mathcal{G}$. In particular, any pattern $Q_1^t$ can be removed from its current location and inserted after any $Q_2^t$, without changing the cost of the solution. So, given any optimal solution, the following reordering process does not change the cost of the solution:

\begin{enumerate}
\item Divide the sequence into patterns $Q^t$ and stray arms in classes $q$ not in patterns
\item Set $i = 0$, $S = \{0\}$
\item Sequence all $Q^i$ together starting at $i$
\item Identify new patterns created by this move
\item Repeat 2 and 3 until nothing changes
\item Consider earliest start time $j$ of an arm such that $j \pmod{p} \notin S$ after final $Q_i$
\item Let $i = j$, and add $j$ to $S$.
\item Repeat 3 through 8 until nothing changes
\end{enumerate}

After this process, all $Q_i$ are grouped together for all $i$, with fewer than $p$ stray arms separating each of these sequences of patterns, since in step 6, if there is no such $j$, then there must be fewer than $p$ arms left, otherwise there would be a pattern among these arms. Thus, the reordered sequence also ends with fewer than $p$ stray arms. Since we started with an optimal solution, and the above process did not change the cost of the solution, there must be an optimal solution of this form. Since every such reordering begins with the $Q_0$ patterns, there are $(p-1)!$ orders in which the $p$ grouped sequences of patterns can show up. For each of these, there are at most $p$ clusters of stray arms each of length less than $p$. There are $O(p^5)$ ways to fill up these $O(p^2)$ slots with arms from the $p^3$ classes. Since there are $O((p^3)^p)$ possible patterns, there are $O(n^{(p^3)^p})$ ways to partition the remaining $O(n)$ arms in each class into patterns, and therefore $O(n^{(p^3)^pp^3}) = O(n^{(p^3)^{p+1}})$ ways to partition all classes into patterns. This yields $O((p-1)!p^5n^{(p^3)^{p+1}}) = O(n^{(p^3)^{p+1}})$ possible solutions of the form reached by executing steps 1-8 above, at least one of which must be optimal. The cost of each of these possible solutions, of which there are polynomially many in $\lvert \mathcal{G} \rvert$, can be easily computed in time polynomial in $\lvert \mathcal{G} \rvert$. If $a$ does not start at $c$, but rather on some arm $A$, $a$ can either visit the leaf of $A$ before returning or $c$, or return to $c$ directly. Compute DMVP for the remaining arms in each of these two cases will yield an overall optimal solution still in time polynomial in $\lvert \mathcal{G} \rvert$. 
 \end{proof}

This polynomial runtime can be significantly improved for the case of $p=2$.

\begin{theorem}
\label{p2GeneralTree}
DMVP in $\mathcal{P}$ over a tree is solvable in $O(n)$ time, when $p = 2$.
\end{theorem}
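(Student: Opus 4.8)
The plan is to root the tree $G$ at the agent's start vertex $s$ (which it occupies at time $0$, hence parity $0$) and solve the problem by a single bottom-up dynamic program carrying only a constant amount of information per vertex. With $p=2$ every edge is of one of three types --- available at even times only, at odd times only, or always --- so the only state that matters when the agent sits at a vertex is the parity of the current time. For each vertex $v$ and each parity $\pi \in \{0,1\}$ I would compute $f(v,\pi)$, the minimum time to cover the subtree rooted at $v$ and return to $v$ given arrival at parity $\pi$, and $g(v,\pi)$, the analogous quantity when the agent may finish anywhere inside the subtree (so that the final, unretraced path lies in $v$'s subtree). The answer is then $g(s,0)$. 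By Lemma~\ref{turnAround}, together with the fact that on a tree every edge must be crossed in order to reach the leaves, there is an optimal solution whose underlying walk is a depth-first traversal in which each edge is crossed at most twice; this is what lets the recursion decompose into independent child visits.

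The engine of the DP is the observation that visiting one child subtree and returning --- descend edge $e_i$ (waiting at most once for it to appear), recursively cover via $f(u_i,\cdot)$, ascend $e_i$ (waiting at most once) --- behaves as a \emph{gadget} whose cost depends only on the entry parity. The first key lemma I would prove is that any such round-trip gadget satisfies $|c^0 - c^1|\le 1$, where $c^\sigma$ is its cost at entry parity $\sigma$: an agent entering at the unfavorable parity can simply wait one step and then imitate the optimal favorable-parity strategy, giving $c^{\bar\sigma}\le c^\sigma+1$, and symmetrically. Since the return time is $c^\sigma$ after entry, its parity is $\pi + c^\sigma \bmod 2$, and a short case check sorts every child gadget into exactly three kinds: \emph{parity-preserving} (cost even, independent of entry), \emph{parity-flipping} (cost odd, independent of entry), and \emph{forced-parity} gadgets (the two entry parities differ in cost by $1$, yet both yield the same fixed return parity, with a penalty of $1$ for entering at the non-preferred parity).

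Given this classification, computing $f(v,\pi)$ reduces to scheduling $v$'s child gadgets in an order, starting from parity $\pi$, that minimizes total cost, and this scheduling step is where I expect the real work to lie. The parity-preserving gadgets can be inserted anywhere for free; the parity-flipping gadgets act as free parity toggles; and each forced-parity gadget resets the running parity to a fixed value regardless of its input, so it suffices to order the forced-parity gadgets to realize as many ``output parity of one equals preferred input parity of the next'' adjacencies as possible, then spend the available flipping gadgets (and, only if too few remain, explicit one-step waits) to correct the residual mismatches. Since there are only four kinds of forced-parity gadget (indexed by preferred-input and fixed-output parity), this optimization depends only on the counts of gadgets of each type and is solvable in $O(\deg v)$ time; any surplus flipping gadgets are harmlessly dumped into the tail of the schedule after the last forced-parity gadget. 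The recursion for $g(v,\pi)$ is identical except that one child is chosen to host the open final path and is visited last via $g(u_i,\cdot)$ rather than $f(u_i,\cdot)$, which requires tracking the minimum cost to leave the round-trip block at each of the two possible parities; this is again $O(1)$ extra bookkeeping.

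Summing $O(\deg v)$ over all vertices gives the claimed $O(n)$ running time. The main obstacle, and the step I would write most carefully, is the child-scheduling argument: proving that the greedy ``match outputs to preferred inputs, then toggle with spare flips'' rule is optimal and that an odd surplus of flipping gadgets never forces avoidable waiting once relegated to the tail. By contrast, the correctness of the depth-first round-trip structure (via Lemma~\ref{turnAround}) and the $|c^0-c^1|\le 1$ bound should be comparatively routine once stated.
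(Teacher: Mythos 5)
Your overall architecture---root at the start vertex, reduce each child subtree to a constant-size ``gadget'' described by its cost at each entry parity, and combine gadgets in $O(\deg v)$ time per node---is the same as the paper's, but the two load-bearing steps of the proof are exactly the ones you leave out, and one of them you justify with an argument that does not work. First, the single-entry decomposition: you claim that Lemma~\ref{turnAround} plus the fact that every tree edge must be crossed gives an optimal solution whose walk is a depth-first traversal crossing each edge at most twice. Lemma~\ref{turnAround} only forbids turning around at \emph{degree-2} vertices of the journey's edge-induced subgraph; re-entering a child subtree $T^{u_i}$ means turning around at branch vertices of degree at least 3 (e.g., at $v$ itself), which the lemma permits. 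In a TVG it is a priori plausible that covering part of a subtree, leaving, and returning later is strictly better, because availabilities shift with time. The paper proves the single-entry property by an induction on the number of entries: merging the $k$-th and $(k{+}1)$-st entries into $T^{u_i}$ incurs a parity penalty of at most $2$ (at most $1$ for the intervening coverage, at most $1$ for the tail of the subtree), and this is paid for exactly by the two saved traversals of the edge $(v,u_i)$. That argument is specific to $p=2$ and is precisely the step you dismiss as ``comparatively routine.''

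Second, the scheduling step. Your taxonomy of gadgets (parity-preserving, parity-flipping, four forced-parity subtypes) is correct as a list of what is conceivable given only $|c^0-c^1|\le 1$, but the paper proves a structural fact that collapses it: for $p=2$, the fastest coverage-with-return of \emph{any} subtree takes even time if it is available at every start parity ($fw11$), and otherwise takes odd time ($fw10$ or $fw01$); this is shown by induction starting from stars (pair off $01$ and $10$ edges). Consequently parity-flipping gadgets and two of your four forced subtypes cannot occur at all, the only patterns are $(fw11)$, $(fw10,fw01)$, and $(fw01,fw10)$, and the optimal order is simply to alternate $fw10$ with $fw01$ (taking one extra of the majority type at the favorable parity if possible) and then take everything else --- no general greedy-matching argument is needed. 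Without that parity fact, you are left having to prove optimality of your ``match outputs to preferred inputs, then spend spare flips'' rule over a space of instances strictly larger than what can arise, and you explicitly defer that proof. So as written the proposal has two genuine gaps: the depth-first decomposition (wrongly attributed to Lemma~\ref{turnAround}) and the ordering lemma (acknowledged but not proven); filling either one requires the parity-based induction that constitutes the actual content of the paper's proof.
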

\begin{proof}
Consider DMVP in $\mathcal{P}$, with $p = 2$, over a tree $T$, for an agent starting at root $o$ at time 0. The proof proceeds as follows: We first show by induction that there is always an optimal solution that never enters any of the subtrees of $o$'s children more than once. We then show that when covered in its entirety, each subtree is of one of three types: ($fw11$) fastest coverage with return to root is always available, ($fw10$) fastest coverage is available only at even times, and ($fw01$) fastest coverage is available only at odd times. Alternating between $fw10$ and $fw01$ subtrees, and then taking the remaining subtrees in any order, before ending at a \emph{furthest leaf} results in an optimal solution, as we maximize how many subtrees are traversed optimally. We can recursively compute the type and costs of covering the maximal subtree rooted at each node $v$, in $O(\deg(v))$ time for each.

Suppose $o$ has adjacent edges $e_1 = (o,u_1),...,e_d = (o,u_d)$. Let  $T^{u_i}$ be the maximal subtree rooted at $u_i$. Suppose agent $a$ starts at $o$. Recall that since $T$ is a tree, an optimal solution can be characterized as the set of leaves ordered by when they are visited.

Suppose an optimal solution ends on some leaf of $T^{u_i}$. Assume that when $T^{u_i}$ can be entered at most $k$ times during a solution, the solution that enters only once is still optimal.  Suppose an optimal solution $S_{k+1}$ enters $T^{u_i}$ $k+1$ times. Then there is some non-empty subgraph $T^{u_i}_k$ of $T^{u_i}$ that $a$ covers the upon the $k$th entry, and $T^{u_i}_{k+1}$ it covers upon the $(k+1)$st entry. Let $T'$ be the subgraph of $T$ covered between the $k$th and $(k+1)$st entries into $T^{u_i}$. When $a$ arrives at $o$ before entering $T^{u_i}$ for the $k$th time, consider an alternate completion resulting in an alternate solution $S_k$, in which $a$ instead immediately covers $T'$ and ends up back at $o$, at a cost of at most 1 plus the cost of this coverage in $S_{k+1}$ (which must occur in $S_{k+1}$ after the $kth$ entry into $T^{u_i}$). If $S_k$ does incur a cost over $S_{k+1}$ for this traversal, then both $S_k$ and $S_{k+1}$ reach $o$ at times $\tau$ and $\tau'$ with equal parity. In $S_k$, $a$ completes coverage of $T$ during its $k$th entry to $T^{u_i}$, by covering $T^{u_i}_k$ and returning to $v_i$, then immediately covering $T^{u_i}_{k+1}$. If $\tau$ and $\tau'$ have equal parity, then $T^{u_i}_{k+1}$ is traversed in both $S_k$ and $S_{k+1}$ in equal time. Otherwise, $S_k$ may incur a cost of 1 over $S_{k+1}$ for this traversal. Therefore, the combined coverage of $T'$ and $T^{u_i}_k$ costs at most 1 more in $S_k$ than in $S_{k+1}$. $S_k$ may incur an additional cost of 1 over $S_{k+1}$ for its completing coverage of $T^{u_i}_{k+1}$, for a total incurrence of at most 2 over $S_{k+1}$ for these traversals. However, $S_{k+1}$ contains two additional traversals of $e_i$, so the total cost of $S_{k+1}$ can be no better than that of $S_k$. Therefore, the solution that enters $T^{u_i}$ only once is optimal.

Recall (from the proof of Theorem \ref{p2}) that in $\mathcal{P}$ with $p=2$, there are only three relevant dynamic edge types: 01, which are available at odd times but not even; 10, which are available at even times but not odd; and 11, which are available at all times. Let $T^u$ be a maximal subtree of $T$ rooted at $u$. Let $c_w(T^u,i)$ be the cost of the foremost journey covering $T^u$, starting at time $t \equiv i \bmod 2$, with returning to end back at $u$. Let $c(T^u,i)$ be the  cost of the foremost journey covering without necessarily returning to $u$. Let $m_w(T^u)$ be the fastest cost of covering $T^u$ with returning to end back at $u$, and $m(T^u)$ be the fastest cost of covering $T^u$ without necessarily returning to $u$. Notice that, since $p = 2$, the foremost cost of these coverings at any time can be no more than $c_w(T^u,i) = m_w(T^u) + 1$ and $c(T^u,i) = m(T^u) + 1$, respectively. Classify $T^u$ as $fw11$ if a fastest coverage with return is always available; $fw10$ if it is available at even times, but not odd; $fw01$ if at odd times, but not even. 

To enable our characterization of an optimal ordering of subtrees, it is necessary that for any tree $T$, if fastest coverage of $T$ with return to $o$ is always available, then fastest coverage takes even time, otherwise, fastest coverage takes odd time. To see this, first consider height 1 trees (i.e., stars). Regardless of when $a$ starts, if the numbers of 01 and 10 edges are equal, then $a$ can alternate between them, taking each at cost 3, and taking all 11 edges at cost 2, the total cost of which will be even. If $a$ starts at time 0 and there are more 10's than 01's, then $a$ can take one more 10 than 01 at cost 3, and take the rest at cost 4, for an odd total. Starting at time 1 $a$ cannot save on this extra edge. A similar case applies when there are more 01's than 10's, but for opposite start times. For trees of greater depth, since each subtree need not be entered more than once, we can use similar reasoning to the height 1 case. If there are the same number of $fw01$ as $fw10$ subtrees, alternate between them to take them all fastest (each at odd cost), so fastest coverage is always available at even cost. Otherwise, one more subtree of the more plentiful type could be taken fastest, depending on start time.   

Since each maximal subtree is covered independently, we can (as in the proof of Theorem \ref{pspider}) look for \emph{patterns} in a solution given as an optimal ordering of child subtrees. Notice that in this case, as a result of the parity of coverage with return to $o$, there are only three patterns: $(fw11)$, $(fw10,fw01)$, and $(fw01,fw10)$. Given the above information for all of a tree $T^v$'s maximal child subtrees, we compute these values for $T^v$: 

Following the characterization of an optimal solution given in the proof of Theorem \ref{pspider}, we can construct $S_0$, an optimal solution with return that starts at time $t \equiv 0 \bmod 2$ by first taking as many copies of $(fw10,fw01)$ as possible, since each of these will result in fastest traversals of the covered subtrees (because fastest coverage of $fw10$ and $fw01$ arms always takes odd time), taking one more $fw10$ if possible, before taking the remaining subtrees in any order. We construct $S_1$, a similar solution with return that starts at time $t \equiv 1 \bmod 2$, by taking an $fw01$ subtree before constructing a time 0 solution in the same way as $S_0$. Since we know the form of an optimal solution, we can easily compute the cost of each of these two solutions in $O(\deg(v))$. Then, for each $i \in \{0,1\}$ and each subtree $T^u$ of $T^v$, we calculate the cost of covering $T_v$ without return starting at time $t \equiv i \bmod 2$, such that $T^u$ is the final subtree covered. We can do this by subtracting the cost of covering $T^u$ in $S_i$, adding 1 if the removal of $T^u$ from $S_i$ necessarily decreases the number  of subtrees taken optimally in $S_i$, and adding the cost of taking $T^u$ without return at the end of this new solution. The foremost cost of covering $T^v$ is the cost of the minimum solution over all $T^u$. Given the classification of $T^u$, this computation takes constant time for each $T^u$, and thus $O(\deg(v))$ overall. Given these costs, it is trivial to classify $T^v$. So, we can recursively compute all required values, at a cost of $O(\deg(v)$ per node, and thus $O(n)$ overall. The optimal cost of a complete solution is then $c(T,0)$.
 \end{proof}

We hypothesize that more efficient algorithms, such as the one for the $p=2$ case, exist for this type of problem for greater values of $p$, and even general $p$, via this method of piecing together fast patterns. We have similar high hopes for larger classes of underlying graphs.

\section{Open Problems and Discussion}
\label{openProbs}

This paper presents significant advances towards isolating the maximal class of graphs over which DMVP in $\mathcal{R}$ is solvable in polynomial time. We conjecture that this maximal class is the class of all graphs with polynomially many spanning trees, all of which have $O(\lg n)$ leaves. Furthermore, we conjecture that this class is equivalent for $\mathcal{R}$ and $\mathcal{B}$. But we are very interested in expanding this class with respect to $\mathcal{P}$, motivated by our solvability results for $\mathcal{P}$ over subclasses of trees. We have shown that for the case of $p=2$, DMVP for a single agent over general trees can be computed in linear time. This result relies on the fact that we know how to optimally piece together patterns with period 2. New methods for finding optimal pattern sequences could greatly reduce computation for cases of $p>2$. We are hopeful that DMVP in $\mathcal{P}$ will be shown to be poly-time solvable over arbitrary trees or at least bounded degree trees, for greater values $p$ both fixed and not fixed.

Considering $\mathcal{B}$ and $\mathcal{R}$, $\mathcal{B}$ is clearly differentiated from $\mathcal{R}$ in that we have at least some ability to approximate in $\mathcal{B}$. There remains, however, an important open question: Is there any class $C$ of underlying graphs such that DMVP is NP-hard over $C$ in $\mathcal{R}$, but not in $\mathcal{B}$? We are particularly interested in whether or not DMVP in $\mathcal{B}$ is NP-hard when the underlying graph is a star and $\Delta$ is fixed, in particular, when $\Delta = 2$. Note: The proof of Theorem \ref{npostar} implies it is hard when $\Delta$ is some relatively small function of the input. We conjecture that even for $\Delta = 2$ this problem is NP-hard, but the highly-restricted nature of the input makes an answer to this problem more elusive than some of the others we have results for. Towards an answer to this question, we give the following observation:

\begin{obs}
\label{noWaiting}
DMVP
in $\mathcal{R}$ over a spider with arms of uniform length $l$, e.g., a star (when $l = 1$), can be decided in polynomial time, when $t$ disallows waiting, i.e., $t = 2n - l - d$, where $d$ is topological distance from $s$ to $c$.
\end{obs}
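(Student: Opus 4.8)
The plan is to exploit the rigidity forced by the deadline and reduce the decision problem to bipartite matching. First I would invoke the Turning Around Lemma (Lemma~\ref{turnAround}): since every non-central, non-leaf vertex of a spider has degree $2$, there is an optimal journey in which the agent, once it enters an arm heading away from (toward) the center, proceeds all the way to the leaf (center) without reversing. Hence a candidate solution decomposes into a handling of the start arm containing $s$ together with a sequence of complete \emph{excursions} $c \to \text{leaf} \to c$ on the remaining arms, the final arm being traversed one-way. Because all arms have length $l$, each out-and-back excursion occupies exactly $2l$ time steps and the final one-way traversal occupies $l$.

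Next I would pin down the slack. The minimum topological length of any covering walk from $s$ is $2\lvert E\rvert - \mathrm{ecc}(s) = 2(n-1) - (l+d)$, which is exactly $2$ below $t = 2n - l - d$. Consequently any journey meeting the deadline wastes at most two steps, so it waits at most twice and otherwise moves at every step. In particular, once the time $\tau_0$ at which the agent first reaches $c$ is fixed (there are only $O(1)$ no-waiting ways to cover the start arm and arrive at $c$, by the lemma), the starting times of all subsequent excursions are determined: excursion $i$ occupies $[\tau_0 + 2l(i-1),\, \tau_0 + 2l\, i]$.

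The key step is the reduction to matching. For each of the $O(1)$ choices of how the start arm is handled, each of the $O(n)$ choices of which arm is traversed one-way last, and each of the $O(n^2)$ placements of the at most two slack/wait steps, the excursion time slots become fully determined. For a fixed slot with start time $\tau$, a given arm can be assigned to it iff it admits a no-waiting out-and-back traversal starting at $\tau$ --- i.e., the edge at distance $j$ from $c$ is available at time $\tau + j - 1$ on the way out and at time $\tau + 2l - j$ on the way back for all $1 \le j \le l$ --- which is checkable in $O(l)$ time. Building the bipartite graph between the remaining arms and the fixed slots and testing for a perfect matching decides feasibility for that configuration in polynomial time; the instance is feasible iff some configuration admits a perfect matching.

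The main obstacle is the bookkeeping that makes the fixed-slot structure exactly correct: because $s$ sits in the interior of its arm and the last arm is traversed one-way, the start and end of the schedule are irregular, and one must verify both that the constant slack cannot be spent on a partial excursion or on splitting an arm (ruled out by the Turning Around Lemma) and that every deadline-meeting journey is captured by one of the enumerated configurations. Once this correspondence is established, running the polynomial-time matching subroutine over the polynomially many configurations yields the claimed polynomial-time decision procedure.
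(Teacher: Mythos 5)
Your core reduction is the same one the paper uses: force the excursion structure via Lemma~\ref{turnAround}, carve the timeline into rigid blocks of length $2l$ (plus a final length-$l$ one-way block), test each (arm, block) pair for waiting-free traversability in $O(l)$ time, and decide via bipartite perfect matching. The divergence is entirely in the slack bookkeeping, and it stems from a notational mismatch rather than a different idea. You read $n = \lvert V \rvert$, so the minimum covering-walk length is $2(n-1)-(l+d)$ and the deadline $t = 2n-l-d$ leaves slack $2$, forcing you to enumerate start-arm handlings, the identity of the last arm, and placements of up to two wait steps. The paper's proof instead counts $n$ as the number of non-central vertices (it says the spider ``has $n/l$ arms''), under which $t = 2n-l-d$ is \emph{exactly} the waiting-free traversal length: the start arm must be finished first, the agent reaches $c$ at the forced time $2l-d$, the blocks are completely rigid, and the choice of last arm needs no guessing because the final length-$l$ block is simply another node on the slot side of the matching. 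Under that intended zero-slack reading, all of your extra enumeration collapses and your proof coincides with the paper's.

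Under your literal reading the extra machinery is genuinely needed, and it is essentially sound, but it has one soft spot you should repair: if a wait step is placed \emph{inside} an excursion slot, that slot has length $2l+1$ or $2l+2$, and your stated per-pair test (edge at distance $j$ from $c$ available at exactly $\tau+j-1$ outbound and $\tau+2l-j$ inbound) no longer applies. You need the slightly more general test ``the arm admits an out-and-back traversal starting at $\tau$ that idles exactly at the designated wait times,'' which is still an $O(l)$ simulation since the turning-around lemma pins down the underlying walk; you should also note that wasted re-traversals (e.g., re-entering a length-$1$ arm of a star, which costs exactly the $2$ slack steps) can be replaced by waits just as turn-arounds can, so your enumerated configurations do capture every deadline-meeting journey. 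With those repairs your argument is correct, and in fact proves a slightly more permissive statement than the one the paper intended.
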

\begin{proof} 
Suppose $G$ is a spider with arms of uniform length $l$. Then, $G$ has $n/l$ arms. Suppose $a$ starts at some vertex $s$ distance $d$ from the central vertex $c$. If $t = 2n - l - d$, then the only solution can be a waiting-free spanning tree traversal of $G$ starting at $s$. If $d > 0$, i.e., $s \neq c$, then the first leaf visited must be the leaf of $s$'s arm. If $a$ starts at $c$, any arm can be traversed first. In either case, starting at the first time $\tau$ that $a$ finds itself at $c$, $a$ must traverse the $O(n/l) = \alpha$ remaining arms $a_1,...,a_\alpha$ each in time ${2l}$, except for the final traversed arm, whose leaf is reached in $l$ steps from $c$, completing the solution. Starting at $\tau$, break the remaining time into $\alpha-1$ length $2l$ time blocks $b_1,...,b_{\alpha-1}$, and a final length $l$ time block $b_\alpha$. For each $a_i$, for each $b_j$, we can straightforwardly compute, in $O(l)$ time, whether or not $a$ can traverse $a_j$ and return to $c$ during $b_j$ without waiting. Deciding whether or not there exists a complete traversal of all $\alpha$ arms without waiting then reduces to the problem of finding a perfect bipartite matching between arms and time blocks, for which there are many known efficient polynomial time algorithms, e.g., \cite{Hopcroft}.
\end{proof}

Overall, our results show some instances where DMVP is tractable as well as showing that DMVP faces difficult computational challenges for some natural classes of underlying topologies and dynamics. These challenges motivate research into online, multi-agent solutions to the problem, since in many cases having a complete global view of the present and future does not appear to be very helpful; moreover, in agent-oriented applications ranging from software agents to mobile robots, the information available to teams of agents can be bounded both temporally and geographically, and such online, multi-agent approaches could be well suited to agent dynamics without diminishing tractability. We have begun to take steps in this direction using edge markovian TVG models \cite{Baumann11}. In these types of stochastic environments, investigating interactive agent policies is an especially interesting direction to pursue.

\end{document}